%%%%%%%%%%%%%%%%%%%%%%%%%%%%%%%%%%%%%%%%%%%%%%%%%%%%%%%%
%
%  Tropical Jacobian and the phase space of the ultra-discrete
%  periodic Toda lattice are isomorphic
%  by Rei Inoue and Tomoyuki Takenawa
%
%%%%%%%%%%%%%%%%%%%%%%%%%%%%%%%%%%%%%%%%%%%%%%%%%%%%%%%%
\documentclass[a4paper,reqno,11pt]{amsart}
\pagestyle{plain}
\tolerance 2000
\textwidth 15cm
\textheight 23cm
\topmargin -.0cm
\oddsidemargin 0.5cm
\evensidemargin 0.5cm

\usepackage{amsmath, graphicx}
\usepackage{amstext,amsfonts,amsbsy,eucal,amssymb}
\usepackage{color}

\numberwithin{equation}{section}

\newtheorem{theorem}{Theorem}[section]
\newtheorem{lemma}[theorem]{Lemma}
\newtheorem{proposition}[theorem]{Proposition}

\theoremstyle{definition}

\newtheorem{remark}[theorem]{Remark}

\newcommand{\R}{\operatorname{\mathbb{R}}}
\newcommand{\Z}{\operatorname{\mathbb{Z}}}

\newcommand{\Q}{\operatorname{\mathbb{Q}}}
\newcommand{\cS}{\operatorname{\mathcal{S}}}

\newcommand{\bZ}{\operatorname{{\bf Z}}}

\def\<{\operatorname{\langle}}
\def\>{\operatorname{\rangle}}

\def\Ibf{\mathbb I}

\makeatletter
\def\iddots{\mathinner{\mkern1mu\raise\p@
    \hbox{.}\mkern2mu\raise4\p@\hbox{.}\mkern2mu
    \raise7\p@\vbox{\kern7\p@\hbox{.}}\mkern1mu}}
\def\adots{\mathinner{\mkern2mu\raise\p@\hbox{.} %% yhmath.sty'©'ç
 \mkern2mu\raise4\p@\hbox{.}\mkern1mu
 \raise7\p@\vbox{\kern7\p@\hbox{.}}\mkern1mu}}
\makeatother

\begin{document}

\baselineskip 16pt

\title{\large 
Tropical Jacobian and the generic fiber of the ultra-discrete
periodic Toda lattice are isomorphic
}

\author{Rei Inoue}
\address{Faculty of Pharmaceutical Sciences,
Suzuka University of Medical Science, 
\newline\phantom{iii}3500-3 Minami-tamagaki, Suzuka, Mie, 513-8670, Japan}
\email{reiiy@suzuka-u.ac.jp}

\author{Tomoyuki Takenawa} 
\address{Faculty of Marine Technology, 
Tokyo University of Marine Science and Technology,  
\newline\phantom{iii}2-1-6 Etchu-jima, Koto-ku, Tokyo, 135-8533, Japan}
\email{takenawa@kaiyodai.ac.jp }

\begin{abstract}
We prove that the general isolevel set of the ultra-discrete periodic Toda
lattice is isomorphic to the tropical Jacobian associated with
the tropical spectral curve. This result implies that 
the theta function solution
obtained in the authors' previous paper is the complete solution.
We also propose a method to solve the initial value problem.
\end{abstract}

\keywords{tropical geometry, Riemann's theta function, 
Toda lattice}

%\renewcommand{\subjclassname}{%
%  \textup{2000} Mathematics Subject Classification}
%\subjclass{Primary: 37J35. Secondary: 14H70, 14H40.}

\maketitle

%%%%%%%%%%%%%%%%%%%%%%
\section{Introduction}
%%%%%%%%%%%%%%%%%%%%%%

\subsection{Background and aim}

The periodic box and ball system (pBBS) \cite{YuraTokihiro02}
and the ultra-discrete periodic Toda lattice (UD-pToda)
\cite{NagaiTokihiroSatsuma98}
are important examples of 
integrable cellular automata, and they are closely related
\cite{KimijimaTokihiro02}.
The initial value problem of the pBBS was solved in many ways:
by the elementary combinatorics \cite{MIT06},
by the combinatorial Bethe ansatz associated with 
the quantum group \cite{KTT06},
and by the ultradiscretization of the solution to 
the discrete periodic Toda lattice 
\cite{IwaoTokihiro07}.
On the way to solve this problem in \cite{KTT06, KS06,IwaoTokihiro07}
they introduced what now called the tropical Jacobian and 
the tropical Riemann theta function,
and they found that there is a bijection between 
the general isolevel set of the pBBS and the 
lattice points of the tropical Jacobian.

In \cite{InoueTakenawa08} we explored the UD-pToda and the pBBS
from the viewpoint of the tropical geometry.
(See \cite{MikhaZhar06} and references therein for the detail
of the tropical geometry.)    
We conjectured that there is an isomorphism between 
the general isolevel set of the UD-pToda and the tropical Jacobian
of the tropical spectral curve.
The aim in this article is to prove this conjecture (Theorem \ref{J-Toda}),
as a sequel of \cite{InoueTakenawa08,InoueTakenawa-pre}. 
As a consequence, we solve the initial value problem of the UD-pToda.

%%%%%%%%%%%%%%%%%%%%%%%%%%%%%%%%%%%%%%%%%%%
\subsection{Ultra-discrete periodic Toda lattice}

Fix a positive integer $g$.
The ultra-discrete periodic Toda lattice (UD-pToda) is given by
the phase space $\mathcal{T}$: 
$$
\mathcal{T} =
\{\tau = (Q_0,\cdots,Q_{g},W_0,\cdots,W_{g}) 
\in \R^{2(g+1)} ~|~ \sum_{n=0}^{g} Q_n < \sum_{n=0}^{g} W_n \},
$$
and the piecewise-linear map on $\mathcal{T}$:
\begin{align}
T: \mathcal{T} \to \mathcal{T} ; ~(Q_n^t,W_n^t)_{n=0,\cdots,g} \mapsto 
     (Q_n^{t+1},W_n^{t+1})_{n=0,\cdots,g},
\end{align}
where
\begin{align}\label{UD-pToda}
  \begin{split}
  &Q_n^{t+1} = \min[W_n^t, Q_n^t-X_n^t],
  \qquad 
  W_n^{t+1} = Q_{n+1}^t+W_n^t - Q_n^{t+1},
  \\
  &X_n^t = \min_{k=0,\ldots,g}[\sum_{l=1}^k (W_{n-l}^t - Q_{n-l}^t)].
  \end{split}
\end{align}
Here we assume the periodicity $Q_{n+g+1}^t = Q_{n}^t$ and 
$W_{n+g+1}^t = W_{n}^t$.
The map $T$ has conserved quantities given by the tropical polynomials 
$C_i(\tau) ~(i=-1,0,\cdots,g)$ on $\mathcal{T}$, which come from
the coefficients of the characteristic polynomial of 
the Lax matrix through the ultra-discrete limit.
For example, we have
\begin{align}
&C_{-1}(\tau)=\sum_{n=0}^g (Q_n+W_n), \quad 
C_0(\tau)= \min [\sum_{n=0}^g Q_n, \sum_{n=0}^g W_n], \nonumber \\
&C_{g-1}(\tau)=\min[\min_{i\neq j}(Q_i+Q_j) ,
\min_{i \neq j}(W_i+W_j) , 
\min_{j\neq i-1,i}(Q_i+W_j) ]  \label{C's}, \\ 
&C_g(\tau)=\min_{n=0,\ldots,g} [Q_n,W_n]. \nonumber
\end{align}
See \cite{InoueTakenawa08} for a detail of $C_i(\tau)$.
For a fixed $C = (C_{-1},C_0,\cdots,C_g) \in \R^{g+2}$,
we define the isolevel set $\mathcal{T}_C$ as
\begin{align}
  \mathcal{T}_C = \{ \tau \in \mathcal{T} ~|~ C_i(\tau) = C_i 
                     ~(i=-1,0,\cdots,g)\}.
\end{align}

%%%%%%%%%%%%%%%%%%%%

\subsection{Bilinear equation}

Let $\cS_t$ ($t\in \Z$) be a subset of infinite dimensional 
space:
$$
  \cS_t = \{T_n^t \in \R ~|~ n \in \Z \},
$$
where $T_n^t$ has a quasi-periodicity; i.e. $T_n^t$ satisfies 
$T_{n+g+1}^t = T_n^t + c_n^t$, where $c_n^t$ satisfies 
\begin{align}\label{c-period}
   ~c_n^t =an+bt+c  
\end{align}
for some $a,b,c \in \R$.
Fix $L \in \R$ such that $2b-a< (g+1)L$ and define 
a map $\phi$ from $\cS_{t} \times \cS_{t+1}$ 
to $\cS_{t+1} \times \cS_{t+2}$ as
$\phi: (T_n^t,T_n^{t+1})_{n \in \Z} \mapsto
 (T_n^{t+1},T_n^{t+2})_{n \in \Z}$
with 
\begin{align} \label{UD-ptau}
  &T_n^{t+2}=2 T_n^{t+1} -T_n^t + X_{n+1}^t,
\end{align}
where we define a function on $\cS_{t} \times \cS_{t+1}$:
\begin{align}\label{X-tau} 
  X_n^t 
  = \min_{l=0,\cdots,g}\bigl[l L + 2 T_{n-l-1}^{t+1} + T_n^t+T_{n-1}^t
          - (2 T_{n-1}^{t+1} + T_{n-l}^t + T_{n-l-1}^t)\bigr].
\end{align}

\begin{proposition}\cite[Lemma 3.2]{InoueTakenawa-pre}
\\
(i) Let $\sigma_t$ be a map $\sigma_t : 
\cS_{t} \times \cS_{t+1} \to \mathcal{T}$ 
given by
\begin{align}\label{WQ-T}
  W_n^t = L + T_{n-1}^{t+1} + T_{n+1}^{t} - T_{n}^{t} - T_{n}^{t+1}+d,
  \quad
  Q_n^t = T_{n-1}^{t} + T_{n}^{t+1} - T_{n-1}^{t+1} - T_{n}^{t}+d,
\end{align}
where $d \in \R$.
Then the following diagram is commutative:
  \begin{align}\label{sigma-T}
    \begin{matrix}
    \cS_{t} \times \cS_{t+1} & \stackrel{\sigma_t}{\to} &  \mathcal{T} \\[1mm] 
    ~\downarrow_\phi & & \downarrow_{T} \\[1mm]
    \cS_{t+1} \times \cS_{t+2} & \stackrel{\sigma_{t+1}}{\to} & \mathcal{T}
    \end{matrix}~~.
  \end{align}
(ii) The UD-pToda \eqref{UD-pToda} can be transformed into a Hirota type 
bilinear equation for the quasi-periodic function $T_n^t$, 
\begin{align}\label{UD-tau}
  T_{n}^{t-1} +  T_{n}^{t+1}
  = 
  \min[2 \, T_{n}^{t}, ~T_{n-1}^{t+1} + T_{n+1}^{t-1} +L].
\end{align}
\end{proposition}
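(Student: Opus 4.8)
The plan is to reduce both parts to a single min-plus identity for the function $X_n^t$, with the periodicity of the lattice doing the essential work. First I would substitute the ansatz \eqref{WQ-T} into $W_n^t - Q_n^t$, obtaining
\[
S_n := W_n^t - Q_n^t = L + 2T_{n-1}^{t+1} + T_{n+1}^t - 2T_n^{t+1} - T_{n-1}^t,
\]
and then telescope the partial sums $\Sigma_k^{(n)} := \sum_{l=1}^k S_{n-l}$. This shows that $\min_{0\le k\le g}\Sigma_k^{(n)}$ matches the $T$-expression \eqref{X-tau} term by term, so the $X_n^t$ of \eqref{UD-pToda} and of \eqref{X-tau} agree under $\sigma_t$. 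Summing $S_n$ over one period and invoking the quasi-periodicity \eqref{c-period} gives $\sum_{n=0}^g S_n = (g+1)L - (2b-a) =: P$, which is positive by the standing hypothesis $2b-a<(g+1)L$; this is exactly the condition $\sum Q_n<\sum W_n$ ensuring that $\sigma_t$ maps into $\mathcal{T}$.

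Next I would establish the engine of the argument. With $Y_n := \min_{0\le m\le g-1}\Sigma_m^{(n)}$, the relation $\Sigma_k^{(n+1)} = S_n + \Sigma_{k-1}^{(n)}$ for $k\ge 1$ together with the periodic wrap-around $\Sigma_g^{(n)} = P - S_n$ yields
\[
X_{n+1} = \min[0,\, S_n + Y_n], \qquad X_n = \min[Y_n,\, P - S_n].
\]
Since $S_n + X_n = \min[S_n + Y_n,\, P]$ and $\min[0, S_n+Y_n]\le 0 < P$, the entry $P$ is inert, giving the master identity $X_{n+1} = \min[0,\, S_n + X_n]$, from which both parts follow.

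For part (ii), shifting \eqref{UD-ptau} by $t\mapsto t-1$ gives $T_n^{t-1} + T_n^{t+1} = 2T_n^t + X_{n+1}^{t-1}$, so \eqref{UD-tau} is equivalent to $X_{n+1}^{t-1} = \min[0,\, T_{n-1}^{t+1} + T_{n+1}^{t-1} + L - 2T_n^t]$. Using \eqref{UD-ptau} once more to write $T_{n-1}^{t+1} = 2T_{n-1}^t - T_{n-1}^{t-1} + X_n^{t-1}$, the bracket collapses to $X_n^{t-1} + S_n^{t-1}$, and the master identity finishes it.

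For part (i), the update $W_n^{t+1} = Q_{n+1}^t + W_n^t - Q_n^{t+1}$ is a plain linear identity among the $T$'s, checked by substituting \eqref{WQ-T} at levels $t,t+1$ with $\phi$; no min enters. For the $Q$-update I would set $A := W_n^t$, $B := Q_n^t - X_n^t$, and $C := Q_n^{t+1}$ computed from $\sigma_{t+1}$ with $T^{t+2}$ supplied by \eqref{UD-ptau}. A direct substitution gives $C - B = X_{n+1}^t$ and $C - A = X_{n+1}^t - X_n^t - S_n^t$; inserting the master identity turns these into $C-B = \min[0, S_n^t + X_n^t]$ and $C - A = -\max[0, S_n^t + X_n^t]$, whence $C = \min[A,B]$ whatever the sign of $S_n^t + X_n^t$. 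This is precisely the rule $Q_n^{t+1} = \min[W_n^t,\, Q_n^t - X_n^t]$, so \eqref{sigma-T} commutes. I expect the main obstacle to be the second step: because the min defining $X_n^t$ runs over a full period, the recursion relating $X_n$ and $X_{n+1}$ only closes after the wrap-around is absorbed into the conserved total $P$, and it is here that the positivity $2b-a<(g+1)L$ is indispensable.
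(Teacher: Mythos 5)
Your proposal cannot be checked against an in-paper argument: the paper quotes this proposition from \cite{InoueTakenawa-pre} (Lemma~3.2) without reproducing its proof. Judged on its own terms, your proof is correct, and every step I verified goes through. The $l$-th entry of the min in \eqref{X-tau} does telescope to exactly $\sum_{l'=1}^{l}(W_{n-l'}^t-Q_{n-l'}^t)$ under the substitution \eqref{WQ-T}, so the $X_n^t$ of \eqref{UD-pToda} and of \eqref{X-tau} coincide; summing $S_n^t=W_n^t-Q_n^t$ over a period gives $P=(g+1)L+a-2b>0$, which is precisely the condition $\sum_n Q_n^t<\sum_n W_n^t$. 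Your master identity $X_{n+1}^t=\min[0,\,S_n^t+X_n^t]$ is then valid because the wrap-around term $\Sigma_g^{(n)}=P-S_n^t$ only contributes the entry $P$ to $S_n^t+X_n^t$, and $P>0$ makes it inert inside the outer $\min[0,\cdot\,]$. From there, the two-case analysis on the sign of $S_n^t+X_n^t$ correctly yields $Q_n^{t+1}=\min[W_n^t,\,Q_n^t-X_n^t]$, the $W$-update is indeed purely linear (the $T^{t+2}$-terms cancel between $W_n^{t+1}$ and $Q_n^{t+1}$), and the substitution $T_{n-1}^{t+1}=2T_{n-1}^t-T_{n-1}^{t-1}+X_n^{t-1}$ reduces \eqref{UD-tau} exactly to the master identity. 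Organizing both (i) and (ii) around this single min-plus recursion is an efficient and self-contained route, very likely close in spirit to the direct verification in \cite{InoueTakenawa-pre} but cleaner in bookkeeping. Two points you should make explicit in a final write-up: first, the wrap-around step uses $S_{n+g+1}^t=S_n^t$, and the claim that $\sigma_t$ lands in $\mathcal{T}$ also requires $Q_{n+g+1}^t=Q_n^t$ and $W_{n+g+1}^t=W_n^t$; both are one-line computations, but they rely on the affine form $c_n^t=an+bt+c$ of \eqref{c-period}, not merely on quasi-periodicity in the abstract, and this is exactly where that hypothesis enters. Second, in (ii) you prove that every $\phi$-orbit satisfies \eqref{UD-tau}; the phrase ``can be transformed into'' could be read as also demanding the converse implication, but the direction you establish, combined with (i), is the one the rest of the paper actually uses, so this is an interpretive caveat rather than a gap.
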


%%%%%%%%%%%%%%%%%%%%%%%%%%%%%

\subsection{Main result}
 
We assume a generic condition for $C=(C_{-1},C_0,\cdots,C_g) \in \R^{g+2}$: 
\begin{align}
  \label{CD-condition}  
  C_{-1} > 2 C_0, ~  
  C_i+C_{i+2} > 2 C_{i+1} ~ (i=0,\cdots,g-2),~
  C_{g-1} > 2 C_g.
\end{align}
Define $L$, $\lambda_0,\lambda_1,\cdots,\lambda_g$ and 
$p_1,\cdots,p_g$ by 
\begin{align}
  \label{partition}
&  L = C_{-1}-2(g+1)C_g, \qquad \lambda_0=C_g, \qquad  
  \lambda_i = C_{g-i}-C_{g-i+1} ~~ (1 \leq i \leq g), \\
&  p_0 = L, \quad
   p_i = L - 2 \sum_{j=1}^g \min[\lambda_i-\lambda_0,\lambda_j-\lambda_0]
  ~~ (1 \leq i \leq g).
\end{align}

Fix $C \in \R^{g+2}$ which satisfies the generic condition 
\eqref{CD-condition}.
Then we have 
$$
  \lambda_0 < \lambda_1 < \lambda_2 < \cdots < \lambda_g, 
  \quad 
  p_0 > p_1 > \cdots > p_g > 0.
$$
Define a positive definite symmetric matrix $K \in M_g(\R)$ as
\begin{align}
K_{ij} &=
\left\{\begin{array}{lcl}
p_{i-1}+p_i+2(\lambda_i-\lambda_{i-1}) &\ &(i=j)\\
-p_i && (j=i+1)\\
-p_j && (i=j+1)\\
0 &&(\mbox{otherwise}).
\end{array}\right.
\end{align}
In fact, $K$ is the period matrix for the tropical curve $\Gamma_C$
determined by $C$. 
See \cite{InoueTakenawa08} for the definition of $\Gamma_C$.
In parallel with the complex algebraic geometry,
we have the tropical Jacobian $J(K)=\R^g/(\Z^g K)$ 
and the tropical Riemann theta function 
$\Theta(\bZ) = \Theta(\bZ;K)$ on $\R^g$: 
$$
\Theta(\bZ) 
= \min_{{\bf m}\in \Z^g} 
  [ \frac12 {\bf m} K{\bf m}^{\bot}+{\bf m}{\bf Z}^{\bot}], \qquad 
({\bf Z}\in \R^g).$$

\begin{proposition}\label{bilinear-theta}
\cite[Corollary 2.13, Theorem 3.5]{InoueTakenawa-pre}
\\
(i) Set $\vec{\lambda} =(\lambda_1-\lambda_0, \lambda_2-\lambda_1,
\cdots,\lambda_g-\lambda_{g-1})$ and  ${\bf e}_1 = (1,0,\cdots,0)$.
For ${\bf Z}_0 \in \R^g$, the function $T_n^t$ given by
\begin{align}
  \label{tau-theta}
  &T_n^t =   \Theta({\bf Z}_0- n L {\bf e}_1+t\vec{\lambda}),
\end{align}
satisfies \eqref{UD-tau}. 
\\
(ii) 
Let $\iota_t$ be the map $\R^g \to \cS_t \times \cS_{t+1}$ defined by
  $$ 
    {\bf Z}_0 \mapsto 
    (T_n^t= \Theta({\bf Z}_0- n L {\bf e}_1+t\vec{\lambda}),~
     T_n^{t+1}=\Theta({\bf Z}_0- n L {\bf e}_1+(t+1)\vec{\lambda}))_{n \in \Z},
  $$
then the following diagram is commutative:
 \begin{align}\label{iota-T}
    \begin{matrix}
    \R^g & \stackrel{\iota_t}{\to} & \cS_{t} \times \cS_{t+1}
    & \stackrel{\sigma_t}{\to} &  \mathcal{T}_C 
    \\[1mm] 
    ~\downarrow_{{\rm id.}} & & ~\downarrow_{\phi}
    & & \downarrow_{T} \\[1mm]
    \R^g & \stackrel{\iota_{t+1}}{\to} & \cS_{t+1} \times \cS_{t+2} & 
    \stackrel{\sigma_{t+1}}{\to} & \mathcal{T}_C
    \end{matrix},
  \end{align} 
where we have $d= C_g$ for $\sigma_t$ \eqref{WQ-T}. 
(See Remark \ref{rem:period}.)
Namely, \eqref{tau-theta} 
gives a solution to \eqref{UD-pToda} through \eqref{WQ-T}.
\end{proposition}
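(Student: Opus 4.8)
The plan is to treat the two parts separately, the substance being the tropical bilinear identity for $\Theta$ in part (i); part (ii) is then an assembly of (i) with the commutative diagram \eqref{sigma-T}. For part (i), writing ${\bf Z}={\bf Z}_0-nL{\bf e}_1+t\vec{\lambda}$ and ${\bf u}=L{\bf e}_1+\vec{\lambda}$, the equation \eqref{UD-tau} to be verified for \eqref{tau-theta} reads
$$\Theta({\bf Z}-\vec{\lambda})+\Theta({\bf Z}+\vec{\lambda})=\min\bigl[\,2\Theta({\bf Z}),\ \Theta({\bf Z}+{\bf u})+\Theta({\bf Z}-{\bf u})+L\,\bigr].$$
Morally this is the ultradiscretization of the classical Hirota bilinear equation for the discrete periodic Toda lattice satisfied by Riemann's theta function, with $+L$ and the vanishing coefficient in the first slot being the images of the two Hirota coefficients; but I would argue directly from the min-plus definition of $\Theta$, so as not to have to justify interchanging the ultradiscrete limit with the theta series. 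One inequality is then immediate: being a minimum of affine functions, $\Theta$ is concave, so midpoint concavity at the midpoint ${\bf Z}$ of $\{{\bf Z}-\vec{\lambda},{\bf Z}+\vec{\lambda}\}$ already gives $\Theta({\bf Z}-\vec{\lambda})+\Theta({\bf Z}+\vec{\lambda})\le 2\Theta({\bf Z})$.

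For the remaining comparisons I would expand each of the three quantities as a double minimum over ${\bf m},{\bf m}'\in\Z^g$ and pass to ${\bf p}={\bf m}+{\bf m}'$, ${\bf q}={\bf m}'-{\bf m}$, constrained by ${\bf p}\equiv{\bf q}\pmod 2$. A short computation using the symmetry of $K$ diagonalizes the quadratic part as $\tfrac14({\bf p}K{\bf p}^\bot+{\bf q}K{\bf q}^\bot)$ and splits the linear part into a common ``center of mass'' term ${\bf p}{\bf Z}^\bot$ and a ``relative'' term equal to $+{\bf q}\vec{\lambda}^\bot$, to $0$, and to $-{\bf q}{\bf u}^\bot+L$ for the three quantities respectively. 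Fixing the parity class $\boldsymbol{\delta}\in\{0,1\}^g$ of ${\bf p}$ (hence of ${\bf q}$) separates each minimum as $P(\boldsymbol{\delta})+Q_\bullet(\boldsymbol{\delta})$, where $P(\boldsymbol{\delta})=\min_{{\bf p}\equiv\boldsymbol{\delta}}\bigl[\tfrac14{\bf p}K{\bf p}^\bot+{\bf p}{\bf Z}^\bot\bigr]$ is common to all three and $Q_\bullet(\boldsymbol{\delta})$ collects the relative term. By the min-plus distributive law the target identity then follows once one shows, for the parity classes that matter, that the $\vec{\lambda}$-shifted relative minimum equals the minimum of the unshifted one and of the $(-{\bf u})$-shifted one corrected by $L$.

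I expect this last step to be the main obstacle. It is a genuinely combinatorial comparison \emph{across} parity classes — the naive per-class version of the identity already fails on the even class — and it is here that the positive-definiteness of $K$ and the precise relations among the data, namely ${\bf u}=L{\bf e}_1+\vec{\lambda}$ with $p_0=L$ together with the orderings $\lambda_0<\cdots<\lambda_g$ and $p_0>\cdots>p_g>0$ from \eqref{partition}, have to be brought in to locate the minimizing lattice vectors and to decide, as a function of ${\bf Z}$, which entry of $\min[\cdot,\cdot]$ is active. Concretely I would pin down the minimizer ${\bf q}$ of the $\vec{\lambda}$-shifted problem, show it differs from the minimizer of one of the other two problems by a single coordinate vector, and read off the constant $L$ from the corresponding diagonal and off-diagonal entries of $K$.

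For part (ii), granting (i) the function $T_n^t$ satisfies \eqref{UD-tau}, which by the first Proposition is the bilinear form of the update $\phi$ given by \eqref{UD-ptau}--\eqref{X-tau}; hence $\phi\circ\iota_t=\iota_{t+1}$ and the left square of \eqref{iota-T} commutes, while the right square is exactly \eqref{sigma-T} applied to $\sigma_t,\sigma_{t+1}$. It then remains to check that $\sigma_t\circ\iota_t$ lands in the prescribed isolevel set $\mathcal{T}_C$ and that $d=C_g$. For this I would substitute the theta values \eqref{tau-theta} into \eqref{WQ-T} and evaluate the conserved tropical polynomials of \eqref{C's}; using the quasi-periodicity $\Theta({\bf Z}+{\bf n}K)=\Theta({\bf Z})-\tfrac12{\bf n}K{\bf n}^\bot-{\bf n}{\bf Z}^\bot$ together with the piecewise-linear minima, one recovers exactly the values $C_i$ encoded by $L,\lambda_i,p_i$ in \eqref{partition}, the normalization $d=C_g$ being forced by matching $C_g=\min_n[Q_n,W_n]$. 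This final computation is routine given (i) and the structure of $K$, so the weight of the argument rests squarely on the cross-parity-class comparison in part (i).
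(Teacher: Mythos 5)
First, a point of comparison: the paper itself contains \emph{no} proof of Proposition \ref{bilinear-theta}; it is imported from \cite{InoueTakenawa-pre}, where part (i) is Corollary 2.13, obtained there as a specialization of a tropical analogue of Fay's trisecant identity, and part (ii) is Theorem 3.5. Your proposal must therefore stand as a self-contained proof, and it does not: the heart of part (i) is missing. Your reductions are all correct --- \eqref{UD-tau} for \eqref{tau-theta} is equivalent to $\Theta({\bf Z}-\vec{\lambda})+\Theta({\bf Z}+\vec{\lambda})=\min[2\Theta({\bf Z}),\,\Theta({\bf Z}+{\bf u})+\Theta({\bf Z}-{\bf u})+L]$ with ${\bf u}=L{\bf e}_1+\vec{\lambda}$; concavity gives the inequality against the first entry; and the substitution ${\bf p}={\bf m}+{\bf m}'$, ${\bf q}={\bf m}'-{\bf m}$ splits each side over parity classes as $\min_{\boldsymbol{\delta}}[P(\boldsymbol{\delta})+Q_\bullet(\boldsymbol{\delta})]$. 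Indeed, by min-plus distributivity the proposition would follow from the ${\bf Z}$-independent per-class statements
\begin{align*}
\min_{{\bf q}\equiv\boldsymbol{\delta}}\bigl[\tfrac14{\bf q}K{\bf q}^{\bot}+{\bf q}\vec{\lambda}^{\bot}\bigr]
=\min\Bigl[\min_{{\bf q}\equiv\boldsymbol{\delta}}\bigl[\tfrac14{\bf q}K{\bf q}^{\bot}\bigr],\
\min_{{\bf q}\equiv\boldsymbol{\delta}}\bigl[\tfrac14{\bf q}K{\bf q}^{\bot}-{\bf q}{\bf u}^{\bot}\bigr]+L\Bigr]
\qquad(\boldsymbol{\delta}\in\{0,1\}^g).
\end{align*}
But this (or whatever cross-class replacement you intend) is exactly where all the difficulty of the proposition lives: it is the tropical Fay-type identity, whose proof in \cite{InoueTakenawa-pre} rests on a careful location of the minimizing lattice vectors using the tridiagonal structure of $K$, $p_0=L$, $p_0>\cdots>p_g>0$ and $\lambda_0<\cdots<\lambda_g$. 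You defer it (``I expect this last step to be the main obstacle'') and replace it by the hope that minimizers differ by a single coordinate vector; that is a plan, not a proof. Worse, your one concrete structural claim --- that the per-class identity ``already fails on the even class'' --- is unsubstantiated, and in the paper's own $g=2$ example, $K=\left(\begin{smallmatrix}26&-10\\-10&20\end{smallmatrix}\right)$, $\vec{\lambda}=(1,4)$, ${\bf u}=(15,4)$, $L=14$, the even-class identity \emph{holds}: both relative minima on the left and on the first slot equal $0$, while the ${\bf u}$-shifted minimum is $-12$ (at ${\bf q}=(2,2)$), so the right-hand side is $\min[0,-12+14]=0$. So you have not even pinned down which statement your missing argument must establish.

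Part (ii) has the same defect in miniature. Granting (i), commutativity of the left square does follow, and the right square is \eqref{sigma-T}; but \eqref{sigma-T} applies only to sequences obeying the quasi-periodicity \eqref{c-period} with $2b-a<(g+1)L$, so you must first verify \eqref{T-qperiod} for \eqref{tau-theta} (via the quasi-periodicity of $\Theta$, $\vec{g}K=(g+1)L{\bf e}_1$ and \eqref{g-lambda}, as in Remark \ref{rem:period}); your proposal never checks this hypothesis. More seriously, the claim that $\sigma_t\circ\iota_t$ lands in the prescribed isolevel set $\mathcal{T}_C$ --- i.e.\ that \emph{all} $g+2$ conserved quantities $C_{-1},\dots,C_g$, not just the four displayed in \eqref{C's}, take the values encoded by $L,\lambda_i,p_i$ --- is the bulk of Theorem 3.5 of \cite{InoueTakenawa-pre}; calling it ``routine'' does not discharge it.
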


Our goal of this article is
to show the following theorem
\begin{theorem}\label{J-Toda}
  Let $\iota_\sigma : J(K) \to \mathcal{T}_C$ 
  be the map induced by $\sigma_t \circ \iota_t : \R^g \to \mathcal{T}_C$. 
  Then the map $\iota_\sigma$ is isomorphic.
\end{theorem}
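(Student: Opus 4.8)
The plan is to prove that $\iota_\sigma$ is (i) well defined on the quotient $J(K)=\R^g/(\Z^gK)$, and then (ii) a bijection whose inverse is piecewise linear. For well-definedness I would first record the quasi-periodicity of the tropical theta function: replacing $\bZ$ by $\bZ+{\bf n}K$ with ${\bf n}\in\Z^g$ and reindexing the minimization ${\bf m}\mapsto{\bf m}+{\bf n}$ gives $\Theta(\bZ+{\bf n}K)=\Theta(\bZ)-{\bf n}\bZ^\bot-\tfrac12{\bf n}K{\bf n}^\bot$, so that under ${\bf Z}_0\mapsto{\bf Z}_0+{\bf n}K$ each $T_n^t=\Theta({\bf Z}_0-nL{\bf e}_1+t\vec\lambda)$ changes by a term that is affine in the indices $n,t$. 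Since the formulas \eqref{WQ-T} express $W_n^t$ and $Q_n^t$ as mixed second differences of $T$ in $n$ and $t$ (each is an alternating sum of four $T$-values over a unit cell of the $(n,t)$-lattice), this affine correction cancels identically; hence $\sigma_t\circ\iota_t$ is $\Z^gK$-periodic and descends to $\iota_\sigma$. That the image lies in $\mathcal{T}_C$ is already guaranteed by Proposition \ref{bilinear-theta}(ii).

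Next I would note that both spaces carry a natural integral piecewise-linear structure of the same dimension $g$: the target $\mathcal{T}_C$ is the common level set of the $g+2$ tropical conserved quantities $C_{-1},\dots,C_g$ inside the $2(g+1)$-dimensional phase space $\mathcal{T}$, so $\dim\mathcal{T}_C=g$. The tropical theta function is concave piecewise linear, and the domains on which the minimizer ${\bf m}\in\Z^g$ is constant form a $\Z^gK$-periodic polyhedral subdivision of $\R^g$, i.e.\ a finite subdivision of $J(K)$. On the interior of each top cell every $T_n^t$ is affine in ${\bf Z}_0$ with linear part the corresponding minimizer, so $\iota_\sigma$ is affine on each cell and thus a genuine morphism of tropical spaces.

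I would then prove that $\iota_\sigma$ is a local isomorphism and injective. On a top cell the integer linear part of $\iota_\sigma$ is read off from \eqref{WQ-T} and the minimizers at the neighbouring points $(n,t),(n\pm1,t),(n,t\pm1)$; the goal is to show this linear map is unimodular, which both makes $\iota_\sigma$ an open PL map with PL local inverse and, by inverting \eqref{WQ-T} to recover the second differences of $T$ and hence the pattern of minimizers, lets one reconstruct the class of ${\bf Z}_0$ in $J(K)$. This simultaneously yields local invertibility and global injectivity, and here the genericity hypotheses \eqref{CD-condition}---which force $\lambda_0<\dots<\lambda_g$, $p_0>\dots>p_g>0$, and a positive definite $K$---are used to keep the cell structure nondegenerate.

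The main obstacle will be surjectivity together with the global piecewise-linearity of the inverse, since the theta solution a priori only realizes $\iota_\sigma$ as a map into $\mathcal{T}_C$. I would attack this by constructing the inverse directly as a tropical Abel--Jacobi map: to $\tau\in\mathcal{T}_C$ one associates the tropical spectral curve $\Gamma_C$ (fixed by $C$, as in \cite{InoueTakenawa08}) together with its tropical eigenvector, or pole, divisor, and sends the class of that divisor to $J(K)$ via the tropical Abel map $\mu$. Checking $\mu\circ\iota_\sigma=\mathrm{id}$ and $\iota_\sigma\circ\mu=\mathrm{id}$ then gives surjectivity and the PL inverse at once; the delicate point is to match the combinatorics of this divisor on $\Gamma_C$ with the minimizer data of $\Theta$ and to verify that the affine pieces of $\mu$ glue into a single integral-affine map. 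As an alternative endgame, once $\iota_\sigma$ is known to be a proper local isomorphism one sees that its image is open and closed in $\mathcal{T}_C$, and evaluating the unimodular Jacobian on a single cell gives $\deg\iota_\sigma=\pm1$, so that $\iota_\sigma$ is a global isomorphism onto the component it meets; identifying that component with all of $\mathcal{T}_C$ then completes the proof.
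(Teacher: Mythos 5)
Your proposal is a plan rather than a proof at the two places where all the difficulty of this theorem actually sits, and at both places the step you defer is the step that fails to be routine. First, injectivity: you reduce it to the claim that the integer linear part of $\iota_\sigma$ on each cell of the theta subdivision is unimodular, and that one can ``invert \eqref{WQ-T} to recover the second differences of $T$ and hence the pattern of minimizers.'' Recovering the second differences of $T_n^t$ from $(Q_n^0,W_n^0)_n$ is indeed the easy half (this is the content of Proposition \ref{sigma_inj}, done by explicit summation after normalizing via Lemmas \ref{Xn0} and \ref{QW0}). But passing from the $T$-values to the minimizer pattern --- i.e.\ determining which fundamental domain $D_{{\bf m}_n}$ each ${\bf Z}_n^t$ lies in --- does not follow from the cell structure: a priori several assignments of minimizers are compatible with the same second differences, and nothing in your outline rules this out. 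The paper needs a genuinely nontrivial device here that your proposal lacks: perturb the datum by $W_{n_0}^0\mapsto W_{n_0}^0+\delta$ with $\delta\gg 1$ (where $X_{n_0+1}^0=0$), which replaces $K$ by $\tilde K=K+\delta J$ while leaving the other $(Q,W)$-data unchanged (Lemma \ref{QWtilde}), and then run time to $1\ll t\ll\delta$ so that the solitons sort themselves; only then (Lemma \ref{Dtilde}) are the regions $\tilde D_{{\bf m}_n}$ known, which yields the $g$ independent linear equations determining ${\bf Z}_0$ (Proposition \ref{iota_inj}).

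Second, surjectivity: both of your endgames are incomplete. The tropical Abel--Jacobi inverse requires exactly the matching between the eigenvector divisor on $\Gamma_C$ and the minimizer data of $\Theta$ that you yourself flag as ``delicate'' and leave open; that matching is the entire content of such a proof, not a finishing detail. The degree argument requires properness (hence compactness of $\mathcal{T}_C$) and, crucially, connectedness of $\mathcal{T}_C$, neither of which you establish; without connectedness you only get surjectivity onto one component, and identifying that component with all of $\mathcal{T}_C$ is precisely the missing step. The paper avoids both issues with a counting argument you do not anticipate: since $\iota_\sigma$ is continuous in $({\bf Z}_0,C)$, it suffices to treat rational $C$; for denominator $N$ the sets $J(K)_N$ and $(\mathcal{T}_C)_N$ are finite, and the pBBS enumeration $|B_{NL,N\lambda}|=N^gp_0p_1\cdots p_{g-1}$ together with $|(\mathcal{T}_{NC})_1|=(g+1)|B_{NL,N\lambda}|=|J(NK)_1|$ shows they have equal cardinality, so injectivity forces surjectivity on these dense subsets, and density finishes. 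Your well-definedness paragraph (the affine correction from quasi-periodicity cancelling in the mixed second differences of \eqref{WQ-T}) is correct, but it is the only part of the proposal that is actually complete.
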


We will prove Theorem \ref{J-Toda} in the following order:
we show $\sigma_t |_{{\rm Im \iota_t}}$ is injective 
(Proposition~\ref{sigma_inj}) in \S \ref{part1}, 
and $\iota_t$ is injective (Proposition~\ref{iota_inj}) in \S \ref{part2}.
Finally we prove Theorem \ref{J-Toda} in \S \ref{part3}.
To help to understand the proof of the theorem
(especially for \S \ref{part2}),
we show some idea and an example in \S \ref{example}. 
In \S \ref{init-prob} we summarize the initial value problem
for UD-pToda.

%%%%%%%%%%%%%%%%%%%%%%%%%%%%%%%%%%
\subsection{Notations and Remarks}
%%%%%%%%%%%%%%%%%%%%%%%%%%%%%%%%%%

We use the following notations of vectors in $\R^g$:
\begin{align*}
\vec{\lambda} &=(\lambda_1-\lambda_0, \lambda_2-\lambda_1,
\cdots,\lambda_g-\lambda_{g-1}), \\
\vec{g}&=(g,g-1,\dots,1),\\
{\bf e}_i&: \mbox{the $i$-th vector of standard basis of } \R^g,\\
\Ibf&=(1,1,\dots,1)={\bf e}_1+\cdots+{\bf e}_g,\\
\Ibf_k&=(1,\dots,1,0,\dots,0)={\bf e}_1+\cdots+{\bf e}_k.
\end{align*}

\begin{remark}
For $C \in \R^{g+2}$ with the condition \eqref{CD-condition}, 
we have the following relations:
\begin{align}
  L> 2 \sum_{i=1}^g (\lambda_i -\lambda_0),
  \qquad 
  \vec{g} \cdot \vec{\lambda}^{\bot}
  =\sum_{i=1}^g (\lambda_i -\lambda_0)=C_0-(g+1)C_g \label{g-lambda}
\end{align}
and 
\begin{align*}
  \sum_{j=1}^g K_{ij} > 0, \qquad
  \mathbb{I} K=
%&\left(L+2(\lambda_1-\lambda_0), 2(\lambda_2-\lambda_1),\dots,
%2(\lambda_{g-1}-\lambda_{g-2}), 2(\lambda_g-\lambda_{g-1})+p_g\right)^{\bot}\\
  p_g {\bf e}_g+ L {\bf e}_1+ 2\vec{\lambda}, \qquad
  \vec{g}K=(g+1)L{\bf e}_1.
\end{align*}
\end{remark}

\begin{remark}\label{rem:period}
The tropical Riemann theta function $\Theta({\bf Z})$ is quasi-periodic:
$$\Theta({\bf Z}+{\bf l}K)=
-\frac12 {\bf l}K{\bf l}^{\bot}-{\bf l}{\bf Z}^{\bot}+\Theta({\bf Z})  
\quad ({\bf l}\in \Z^g)$$
and even: $\Theta(-{\bf Z})=\Theta({\bf Z}).$
Thus, if we set $T_n^t$ as \eqref{tau-theta},
it satisfies the following quasi-periodicity;
  \begin{align}\label{T-qperiod}
  T_{n+g+1}^t = T_n^t + c_n^t, 
  \qquad c_n^t =an+bt+c= \vec{g} \cdot 
  (\bZ_0-nL\vec{{\bf e}_1}+\vec{\lambda} t - \frac{1}{2} \vec{g} K)^\bot,
  \end{align}
which yields
$$
  a=-gL, \quad b=\sum_{i=1}^g (\lambda_i-\lambda_0).
$$
Then we obtain $d=C_g$ at \eqref{WQ-T}, via \eqref{partition} and 
$$C_{-1}=\sum_{n=0}^g(Q_n^t+W_n^t)=(g+1)L+2(g+1)d+a.$$
\end{remark}

%%%%%%%%%%%%%%%%%%%%%%%%%%%%%%%%%%%%%%%%%%%%%%%%%%%%%%%%%%%
\section{Proof of Theorem \ref{J-Toda}}
%%%%%%%%%%%%%%%%%%%%%%%%%%%%%%%%%%%%%%%%%%%%%%%%%%%%%%%%%%%

In this section, we assume $C_g =0$
without loss of generality.
For $\mathbf{m}=(m_1,m_1,\ldots,m_g) \in \Z^g$,
set the fundamental region $D_{\bf m}$ of $\Theta({\bf Z})$ as
$D_{\bf m}=
\{{\bf Z}\in \R^g \ ;\  \Theta({\bf Z})=\frac12 {\bf m}K{\bf m}^{\bot}
+{\bf m}{\bf Z}^{\bot} \}$
which is explicitly written as 
\begin{align}\label{region}
  D_{\bf m}=&\{{\bf Z}\in \R^g ~|~ -{\bf l} {\bf Z}^{\bot} 
\leq  {\bf l}K({\bf m}+\frac12 {\bf l})^{\bot}\\
&\quad  \mbox{ for any ${\bf l}=\pm({\bf e}_j+{\bf e}_{j+1}+\cdots+{\bf e}_k)$ 
 and $1\leq j\leq k \leq g$} \} \nonumber
\end{align} 
from Lemma~2.9 of \cite{InoueTakenawa-pre}.
For a fixed ${\bf Z}_0 \in \R^g$, we write 
${\bf Z}_n^t={\bf Z}_0-nL{\bf e}_1+t\vec{\lambda}$. 
Due to the structure of the fundamental regions,
it is easy to see that if ${\bf Z}_n^t \in D_{{\bf m}}$, then 
\begin{align}\label{period-D}
{\bf Z}_{n+g+1}^t \in D_{{\bf m}+\vec{g}},
\qquad 
\bZ_n^t = \bZ_{n+g+1}^t \text{ mod } \Z^g K.  
\end{align}

%%%%%%%%%%%%%%%%%%%%
\subsection{Idea and example}
%%%%%%%%%%%%%%%%%%%%
\label{example}

We first recall the relation between the UD-pToda and the pBBS.
The pBBS is a cellular automaton
that the finite number of balls move in a periodic array of $L$ boxes
each of which has one ball at most \cite{YuraTokihiro02}.
This system has conserved quantities parameterized by
a non-decreasing array
$\lambda = (\lambda_1,\cdots,\lambda_g) \in (\Z_{> 0})^g$.
We write $0$ and $1$ for 
``an empty box" and ``an occupied box" respectively.
Let $B_L \simeq \{0,1\}^{\times L}$ 
be the phase space of $L$-periodic BBS and
$B_{L,\lambda} \subset B_L$ be a set of the states 
whose conserved quantity is $\lambda$. 
Then the injection 
$\beta: ~B_{L,\lambda} \hookrightarrow \mathcal{T}_C \cap \Z^{2g+2};
~ b \mapsto (Q_0,W_0,Q_1,W_1,\cdots,Q_{g},W_{g})$ is defined as follows:
\begin{enumerate} 
\item 
if the leftmost entry of $b$ is $1$, then set 
$Q_0 = \sharp(\text{the first consecutive $1$'s from the left})$,
otherwise set $Q_0 = 0$.

\item
Set $W_i = \sharp(\text{the $(i-1)$-th consecutive $0$'s from the left})$
for $i=0,\cdots, g$.
If $Q_0 > 0$,
set $Q_i = \sharp(\text{the $(i-1)$-th consecutive $1$'s from the left})$,
otherwise set 
$Q_i = \sharp(\text{the $i$-th consecutive $1$'s from the left})$
for $i=1,\cdots, g$.
\end{enumerate}
The following shows how $\beta$ works;
\begin{align*}
&    \underbrace{1 \cdots 1}_{Q_0} \underbrace{0 \cdots 0}_{W_0}
    ~~\cdots~~ 
    \underbrace{0 \cdots 0}_{W_{g-1}} \underbrace{1 \cdots 1}_{Q_{g}}
    \quad (W_g = 0)\\
&    \underbrace{0 \cdots 0}_{W_0} \underbrace{1 \cdots 1}_{Q_1}
    ~~\cdots~~ 
    \underbrace{1 \cdots 1}_{Q_{g}} \underbrace{0 \cdots 0}_{W_{g}}
    \quad (Q_0 = 0).
\end{align*}
The map $\beta$ is not surjective but induces an isomorphism
from $B_{L,\lambda}$ to $\mathcal{T}_C \cap \Z^{2g+2}$ 
modulo the shift $(Q_0,W_0,Q_1,W_1,\cdots,Q_{g},W_{g})
\mapsto (Q_1,W_1,\cdots,Q_{g},W_{g},Q_0,W_0)$.
See Section~4 of \cite{InoueTakenawa08} for the details. 

Let us consider the UD-pToda of $g=2$ with the initial values
\begin{align}\label{example-init}
  (Q_0,W_0,Q_1,W_1,Q_2,W_2)=(0,1,2,7,4,0)
\end{align}
at $t=0$. Thus we have
$$
  (X_0^0,X_1^0,X_2^0)=(-4,-3,0), \quad L = 14, \quad \vec{\lambda}=(1,4), 
  \quad K = \begin{pmatrix} 26 & -10 \\ -10 & 20 \end{pmatrix}.
$$
We assume that $T_n^t$ is written as 
$T_n^t = \Theta({\bf Z}_n^t = {\bf Z}_0 - n L {\bf e}_1 + \vec{\lambda} t)$
for all $n,t \in \Z$.
From $X_2^0=0$ and \eqref{UD-ptau}, 
the points 
$({\bf Z}_1^0,\Theta({\bf Z}_1^0)), ({\bf Z}_1^1,\Theta({\bf Z}_1^1)), 
({\bf Z}_1^2,\Theta({\bf Z}_1^2))$ are colinear and thus
the points ${\bf Z}_1^0, {\bf Z}_1^1, {\bf Z}_1^2$ 
belong to the same region $D_{\bf m}$.
Thus the points ${\bf Z}_{-2}^0, {\bf Z}_{-2}^1, {\bf Z}_{-2}^2$ 
belong to the same region $D_{{\bf m} - (2,1)}$ from \eqref{period-D}.
By assuming ${\bf m} = (2,1)$, in this case 
we can fortunately find ${\bf Z}_0=(-28,-3)$ by a heuristic way.
See Fig.~\ref{ex1a}.
The following is the time evolution of \eqref{example-init}: 
$$
\begin{array}{ccc}
t & b^t &(Q_0^t,W_0^t,Q_1^t,W_1^t,Q_2^t,W_2^t)\\[1mm]
0&{}_{s}01100000001111&(0,1,2,7,4,0)\\
1&{}_{s}10011111000000&(1,2,5,6,0,0)\\
2&0_{s}1000000111110&(1,6,5,1,0,1)
\end{array}
$$
where ${}_s$ denotes the place of $Q_0$ in the corresponding state 
of the pBBS.

\begin{figure}
\begin{center}
\unitlength=1mm
\begin{picture}(100,80)(-50,-30)
\put(0,0){\vector(1,0){40}}
\put(0,0){\line(-1,0){60}}
\put(0,0){\vector(0,1){38}}
\put(0,0){\line(0,-1){30}}
\put(40,3){$Z_1$}
\put(3,40){$Z_2$}

\thicklines

%D_{(0,0)}
\put(13,0){\line(-1,1){10}}
\put(3,10){\line(-1,0){16}}
\put(-13,10){\line(0,-1){10}}
\put(-13,0){\line(1,-1){10}}
\put(-3,-10){\line(1,0){16}}
\put(13,-10){\line(0,1){10}}
\put(10,-8){$ \longleftarrow D_{(0,0)}$}

%D_{(1,0)}
\put(-13,10){\line(-1,1){10}}
\put(-23,20){\line(-1,0){16}}
\put(-39,20){\line(0,-1){10}}
\put(-39,10){\line(1,-1){10}}
\put(-29,0){\line(1,0){16}}
%\put(-13,0){\line(0,1){10}}
\put(-15,-15){$D_{(1,1)}$}

%D_{(1,1)}
%\put(13,0){\line(-1,1){10}}
%\put(3,10){\line(-1,0){16}}
\put(-29,0){\line(0,-1){10}}
\put(-29,-10){\line(1,-1){10}}
\put(-19,-20){\line(1,0){16}}
\put(-3,-20){\line(0,1){10}}
\put(-33,13){$D_{(1,0)}$}

%D_{(2,1)}
%\put(-29,0){\line(-1,1){10}}
\put(-39,10){\line(-1,0){16}}
\put(-55,10){\line(0,-1){10}}
\put(-55,0){\line(1,-1){10}}
\put(-45,-10){\line(1,0){16}}
%\put(13,-10){\line(0,1){10}}
\put(-53,5){$D_{(2,1)}$}

\thinlines

{\color{red}
\put(0,-3){\line(-1,0){42}}
\put(1,1){\line(-1,0){42}}
\put(2,5){\line(-1,0){42}}
\put(0,-3){\line(1,4){2}}
\put(-14,-3){\line(1,4){2}}
\put(-28,-3){\line(1,4){2}}
\put(-42,-3){\line(1,4){2}}
{\scriptsize
\put(0,-7){${\bf Z}_{-2}^0$}
\put(-14,-7){${\bf Z}_{-1}^0$}
\put(-28,-7){${\bf Z}_{0}^0$}
\put(-42,-7){${\bf Z}_{1}^0$}
\put(2,-2){${\bf Z}_{-2}^1$}
\put(3,4){${\bf Z}_{-2}^2$}
}
}
\end{picture}
\caption{$D_{{\bf m}}$ and ${\bf Z}_n^t$}\label{ex1a}
\end{center}
\end{figure}

\noindent {\bf Observations.}\
From Fig.~\ref{ex1a} we see the following facts:
\\
$\cdot$ 
${\bf Z}_n^t \in D_{{\bf m}}$ implies ${\bf Z}_{n+1}^t \in D_{{\bf m}}$,
$D_{{\bf m}+(1,0)}$ or $D_{{\bf m}+(1,1)}$ (Lemma~\ref{regionZ}),
\\
$\cdot$ 
${\bf Z}_{-1}^0, {\bf Z}_{0}^0 \in D_{(1,1)}$
while $Q_0^0=W_{2}^0=0$ (Lemma~\ref{QW0}).

From these observations we see the following:
even if we do not know ${\bf Z}_0$, 
once the quantities of $T_{n}^0$ and 
the domain $D_{{\bf m}_n}$ to which ${\bf Z}_n^0$ belongs are given
for all $n\in \Z/3\Z$,
we have two independent linear equations for ${\bf Z}_0$.
Actually, from
${\bf Z}_{-1}^0 \in D_{(1,1)}$ and ${\bf Z}_{1}^0 \in D_{(2,1)}$
we have 
\begin{align*}
\frac{1}{2}(1,1)K(1,1)^{\bot} +(1,1)({\bf Z}_0+(L,0))^{\bot}&=T_{-1}^0,\\
\frac{1}{2}(2,1)K(2,1)^{\bot} +(2,1)({\bf Z}_0-(L,0))^{\bot}&=T_1^0,\\
\end{align*}
and ${\bf Z}_0$ is uniquely determined.
 
Therefore our problem is how to determine the region $D_{{\bf m}_n}$
to which ${\bf Z}_n^t$ belongs when $T_n^t$ are given.
For this purpose
we investigate what happens
when we replace $W_{n_0}^0$ with $\tilde{W}_{n_0}^0=W_{n_0}^0+\delta$ 
by using $\delta \gg 1$,
for $n_0$ as $X_{n_0+1}^0 = 0$.

In the previous example we have $n_0 = -2$ and set $\delta=10$:
$$
\begin{array}{ccc}
t & b^t & 
(\tilde{Q}_0^t,\tilde{W}_0^t,\tilde{Q}_1^t,\tilde{W}_1^t,\tilde{Q}_2^t,
\tilde{W}_2^t)
\\[1mm]
-3 & 0001111100000000000000{}_{s}10 & (1,1,0,3,5,14)\\
-2 & 00000000011111000000000{}_{s}1 & (1,0,0,9,5,9)\\
-1 & 100000000000000111110000{}_{s} & (0,0,1,14,5,4)\\
0  &{}_{s}011000000000000000001111 & (0,1,2,17,4,0)\\
1 &{}_{s}100111110000000000000000 & (1,2,5,16,0,0)\\
2 &0{}_{s}10000001111100000000000 & (1,6,5,11,0,1)\\
3 &00{}_{s}1000000000011111000000 & (1,11,5,6,0,2)\\
\end{array}
$$

\begin{figure}
\begin{center}
\unitlength=1mm

\begin{picture}(120,100)(-80,-40)
\put(0,0){\vector(1,0){40}}
\put(0,0){\line(-1,0){110}}
\put(0,0){\vector(0,1){48}}
\put(0,0){\line(0,-1){45}}
\put(40,3){$\tilde{Z}_1$}
\put(3,50){$\tilde{Z}_2$}

%D_{(0,0)}
\put(23,0){\line(-1,1){20}}
\put(3,20){\line(-1,0){26}}
\put(-23,20){\line(0,-1){20}}
\put(-23,0){\line(1,-1){20}}
\put(-3,-20){\line(1,0){26}}
\put(23,-20){\line(0,1){20}}
\put(10,-10){$\tilde{D}_{(0,0)}$}

%D_{(1,0)}
\put(-23,20){\line(-1,1){20}}
\put(-43,40){\line(-1,0){26}}
\put(-69,40){\line(0,-1){20}}
\put(-69,20){\line(1,-1){20}}
\put(-49,0){\line(1,0){26}}
%\put(23,-20){\line(0,1){20}}
\put(-60,30){$\tilde{D}_{(1,0)}$}

%D_{(1,1)}
%\put(23,0){\line(-1,1){20}}
%\put(3,20){\line(-1,0){26}}
\put(-49,0){\line(0,-1){20}}
\put(-49,-20){\line(1,-1){20}}
\put(-29,-40){\line(1,0){26}}
\put(-3,-40){\line(0,1){20}}
\put(-20,-30){$\tilde{D}_{(1,1)}$}

%D_{(2,1)}
%\put(23,0){\line(-1,1){20}}
\put(-69,20){\line(-1,0){26}}
\put(-95,20){\line(0,-1){20}}
\put(-95,0){\line(1,-1){20}}
\put(-75,-20){\line(1,0){26}}
%\put(23,-20){\line(0,1){20}}
\put(-90,10){$\tilde{D}_{(2,2)}$}

\thinlines
\put(0,-3){\color{red}\line(-1,0){72}}
\put(1,1){\color{red}\line(-1,0){72}}
\put(2,5){\color{red}\line(-1,0){72}}
\put(0,-3){\color{red}\line(1,4){2}}
\put(-24,-3){\color{red}\line(1,4){2}}
\put(-48,-3){\color{red}\line(1,4){2}}
\put(-72,-3){\color{red}\line(1,4){2}}

\put(-3,-15){\color{blue}\line(-1,0){72}}
\put(-2,-11){\color{blue}\line(-1,0){72}}
\put(-1,-7){\color{blue}\line(-1,0){72}}
\put(3,9){\color{blue}\line(-1,0){72}}
\put(4,13){\color{blue}\line(-1,0){72}}

\put(-3,-15){\color{blue}\line(1,4){3}}
\put(-27,-15){\color{blue}\line(1,4){3}}
\put(-51,-15){\color{blue}\line(1,4){3}}
\put(-75,-15){\color{blue}\line(1,4){3}}
\put(2,5){\color{blue}\line(1,4){2}}
\put(-22,5){\color{blue}\line(1,4){2}}
\put(-46,5){\color{blue}\line(1,4){2}}
\put(-70,5){\color{blue}\line(1,4){2}}

{\scriptsize
\put(-78,-4){\color{red}$\tilde{{\bf Z}}_{1}^0$}
\put(-77,0){\color{red}$\tilde{{\bf Z}}_{1}^1$}
\put(-76,4){\color{red}$\tilde{{\bf Z}}_{1}^2$}

\put(-2,-16){\color{blue}$\tilde{{\bf Z}}_{-2}^{-3}$}
\put(-1,-12){\color{blue}$\tilde{{\bf Z}}_{-2}^{-2}$}
\put(0,-8){\color{blue}$\tilde{{\bf Z}}_{-2}^{-1}$}
\put(1,-4){\color{red}$\tilde{{\bf Z}}_{-2}^0$}
\put(2,0){\color{red}$\tilde{{\bf Z}}_{-2}^1$}
\put(3,4){\color{red}$\tilde{{\bf Z}}_{-2}^2$}
\put(4,8){\color{blue}$\tilde{{\bf Z}}_{-2}^3$}
\put(5,12){\color{blue}$\tilde{{\bf Z}}_{-2}^4$}
}
\end{picture}

\caption{$\tilde{D}_{\vec{m}}$ and $\tilde{{\bf Z}}_n^t$}\label{ex1b}
\end{center}
\end{figure}
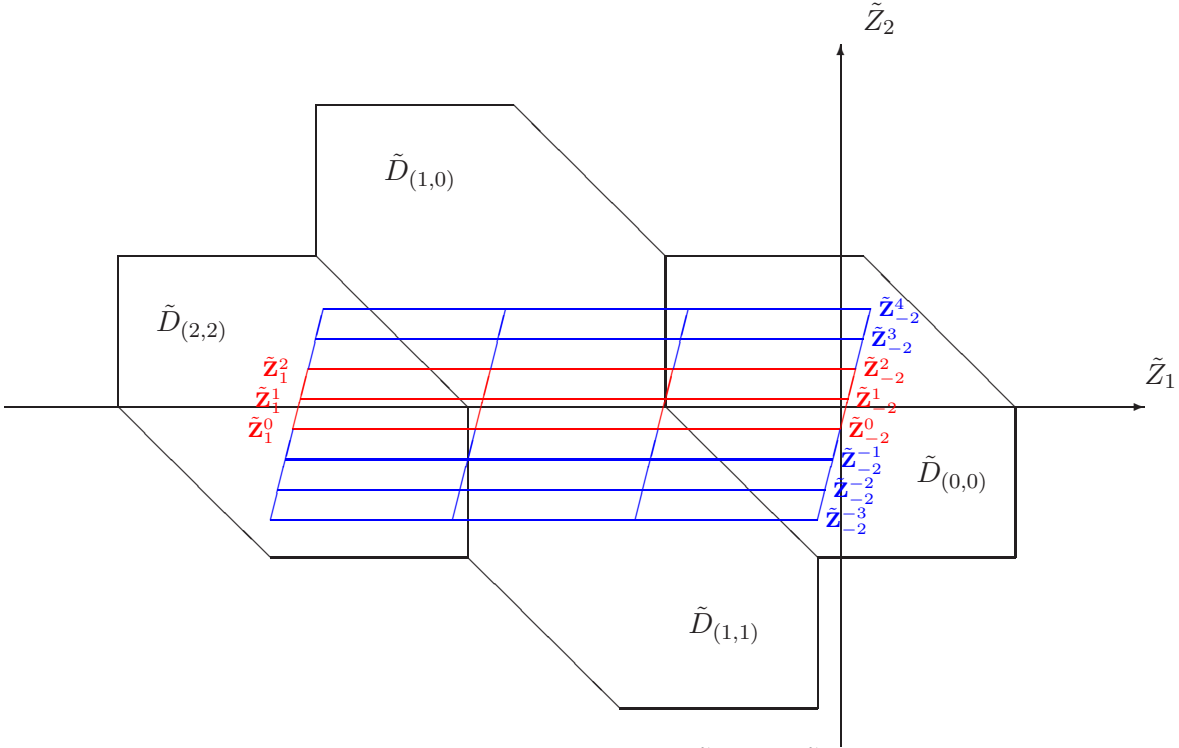

\noindent {\bf Observations.}\\
$\cdot$
By assuming $\tilde{{\bf Z}}_{-2}^0 \in \tilde{D}_{(0,0)}$,
we can find $\tilde{{\bf Z}}_0=(-48,-3)$, thus we have 
${\bf Z}_{-2}^0=\tilde{{\bf Z}}_{-2}^0$.
\\
$\cdot$
For $t=1,2,3$ we have ${\bf m}_{-2} = (0,0)$ (by the assumption),
${\bf m}_{-1}=(0,0)$, ${\bf m}_0 = (1,0)$ and ${\bf m}_1 = (2,1)$.

The second observation implies
$\tilde{{\bf Z}}_{0}^t \in D_{(1,0)}$ and $\tilde{{\bf Z}}_{1}^t \in D_{(2,1)}$.Then $\tilde{{\bf Z}}_0$ is uniquely determined in the same way as the above,
once $\{T_n^t\}_n$ is given.  
By using the first observation $\bZ_0$ is uniquely fixed by $\tilde{\bZ}_0$.

\begin{remark}
This idea of replacing $W_{n_0}^0$ with a huge number
$W_{n_0}^0 + \delta$
(where $X_{n_0+1} = 0$)
would be understandable by considering the 
corresponding pBBS.
In the pBBS the equation $X_{n_0+1}^t=0$ means that
the ``load of carrier'' \cite{TakahashiMatsukidaira97} 
is zero at the beginning ``$1$'' of
$Q_{n_0+1}^t$, and hence $\tilde{Q}_n^{t+1}$ and $\tilde{W}_n^{t+1}$ does not
change from the original ones except $\tilde{W}_{n_0}^{t+1}$.   
Moreover, for a large $t$ 
the clusters of 1's in pBBS will align
coherently in the order of their sizes 
$\lambda_1, \lambda_2, \dots, \lambda_g$ thanks to $\delta$.
This indicates 
$\tilde{Q}_n^t =
\lambda_{n-n_0-1}$ for $n=n_0+1,\dots,n_0+g+1$
and $\tilde{X}_n^t = 0$ for all $n$ in the UD-pToda.
On the other hand from $\tilde{X}_n^t = 0$ and \eqref{WQ-T} we obtain
$\tilde{Q}_n^t = ( {\bf m}_n - {\bf m}_{n-1} ) \lambda^\bot$.
By making use of these properties the region $D_{{\bf m}_n}$
to which $\tilde{\bZ}_n^t$ belongs is obtained for a large $t$ 
as Lemma~\ref{Dtilde}.  
\end{remark}

%%%%%%%%%%%%%%%%%%%%%%%%%%%%%%%%%%%%%%%%%%%%%%%%%
\subsection{Proof of Proposition \ref{sigma_inj}}
%%%%%%%%%%%%%%%%%%%%%%%%%%%%%%%%%%%%%%%%%%%%%%%%%
\label{part1}
Consider $(Q_n^t$, $W_n^t)_n \in \mathcal{T}_C$ 
in the image of $\sigma_t \circ \iota_t$,
i.e. $Q_n^t$ and $W_n^t$ are written in terms of the tropical Riemann 
theta function via \eqref{WQ-T} 
and $T_n^t=\Theta({\bf Z}_n^t ={\bf Z}_0- n L{\bf e}_1+ t\vec{\lambda})$
with some ${\bf Z}_0 \in \R^g$.

\begin{lemma} \label{regionZ} 
If ${\bf Z} \in D_{\bf m}$, 
then ${\bf Z}- n L{\bf e}_1+ t\vec{\lambda}$ ($n=0,-1,\ t=0,1,2$) 
belongs to $D_{{\bf m}'}$, 
where $m_1'-m_1 \in \{ 0,-1\}$.  
\end{lemma}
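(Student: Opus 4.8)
The plan is to recast membership in $D_{\bf m}$ as a system of centered interval constraints and then to control the first coordinate of the index by combining a global monotonicity inequality with a telescoping cascade through those constraints.

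First I would note that, by \eqref{region}, ${\bf Z}\in D_{\bf m}$ is equivalent to ${\bf P}:={\bf Z}+{\bf m}K$ satisfying $|({\bf e}_j+\cdots+{\bf e}_k){\bf P}^{\bot}|\le\frac12({\bf e}_j+\cdots+{\bf e}_k)K({\bf e}_j+\cdots+{\bf e}_k)^{\bot}$ for every consecutive vector; equivalently $D_{\bf m}=D_{\bf 0}-{\bf m}K$. For the vectors $\Ibf_k={\bf e}_1+\cdots+{\bf e}_k$ the tridiagonal form of $K$ telescopes to $a_k:=\Ibf_kK\Ibf_k^{\bot}=L+p_k+2(\lambda_k-\lambda_0)>0$, and the row $\Ibf_kK$ has entries $L+2(\lambda_1-\lambda_0)$, $2(\lambda_j-\lambda_{j-1})$ $(2\le j\le k-1)$, $p_k+2(\lambda_k-\lambda_{k-1})$, $-p_k$ in positions $1,j,k,k+1$; the first $k$ are positive and sum to $a_k$. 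Writing ${\bf v}=-nL{\bf e}_1+t\vec\lambda$, ${\bf Z}'={\bf Z}+{\bf v}$, ${\bf P}'={\bf Z}'+{\bf m}'K$ and ${\bf d}={\bf m}'-{\bf m}$, the goal becomes to exclude $d_1\ge1$ and $d_1\le-2$.

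The two tools are as follows. Summing the minimality inequalities $f_{{\bf m}'}({\bf Z}')\le f_{\bf m}({\bf Z}')$ and $f_{\bf m}({\bf Z})\le f_{{\bf m}'}({\bf Z})$, where $f_{\bf a}({\bf Z})=\frac12{\bf a}K{\bf a}^{\bot}+{\bf a}{\bf Z}^{\bot}$, cancels the quadratic terms and yields the monotonicity inequality ${\bf d}\,{\bf v}^{\bot}\le0$, i.e.\ ${\bf d}\vec\lambda^{\bot}\le0$ when $n=0$ and $Ld_1+t\,{\bf d}\vec\lambda^{\bot}\le0$ when $n=-1$. Next, from ${\bf P},{\bf P}'\in D_{\bf 0}$ and ${\bf P}'-{\bf P}={\bf v}+{\bf d}K$ one gets the box constraints $|\Ibf_k{\bf v}^{\bot}+\Ibf_kK{\bf d}^{\bot}|\le a_k$ for all $k$, where the sign and size facts $0\le\Ibf_k{\bf v}^{\bot}<a_k$ follow from $\Ibf_k\vec\lambda^{\bot}=\lambda_k-\lambda_0>0$, $\Ibf_k{\bf e}_1^{\bot}=1$, $p_k>0$ and $L>2\sum_i(\lambda_i-\lambda_0)$ from \eqref{g-lambda}.

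To exclude $d_1\ge1$ I would run a cascade on the upper box constraints $\sum_{j\le k}(\Ibf_kK)_j\,d_j-p_k\,d_{k+1}\le a_k-\Ibf_k{\bf v}^{\bot}$. If $d_1,\dots,d_k\ge1$, positivity of the coefficients gives $\sum_{j\le k}(\Ibf_kK)_jd_j\ge a_k$, whence $p_kd_{k+1}\ge\Ibf_k{\bf v}^{\bot}>0$ and so $d_{k+1}\ge1$ (the degenerate case ${\bf v}={\bf 0}$, i.e.\ $n=t=0$, is trivial). By induction $d_j\ge1$ for all $j$, so ${\bf d}\vec\lambda^{\bot}\ge\lambda_g-\lambda_0>0$, and together with $d_1\ge1$ this contradicts the monotonicity inequality for both $n=0$ and $n=-1$. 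The reverse bound $d_1\le-2$ would be attacked by a parallel downward cascade on the lower box constraints: when $n=0$ (and for the smaller shifts generally) this forces $d_j\le-1$ for all $j$ and then violates the $k=g$ lower constraint via $\Ibf_gK=\mathbb IK=p_g{\bf e}_g+L{\bf e}_1+2\vec\lambda$ and \eqref{CD-condition}. The main obstacle is the extreme shift $n=-1,\ t=2$, where the estimate is sharp: $d_1=-2$ is attained precisely when ${\bf P}$ and ${\bf P}'$ sit on opposite $\Ibf_1$-walls of $D_{\bf 0}$, so the closed box constraints do not by themselves exclude it. Closing this case requires using that a generic ${\bf Z}$ lies in the interior of $D_{\bf m}$, so that the relevant inequalities are strict, and then recovering the boundary case by continuity while choosing the minimizer ${\bf m}'$ consistently; organizing this strictness, rather than any single estimate, is where the real work lies.
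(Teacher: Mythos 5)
Your route is genuinely different from the paper's. The paper normalizes ${\bf m}={\bf 0}$ and, for each candidate ${\bf m}'$ with $m_1'\notin\{0,-1\}$, evaluates a single test functional $f({\bf l},{\bf m}')$, where ${\bf l}$ is the $0$--$1$ indicator vector of the out-of-range coordinates of ${\bf m}'$: this gives $f>0$ (exclusion) when $m_1'<-2$ or $m_1'>1$, and $f\geq 0$ (non-interiority) when $m_1'\in\{-2,1\}$. The boundary cases are then absorbed in one stroke: if ${\bf Z}'\in D_{{\bf m}'}$ with $m_1'\in\{-2,1\}$, the test inequality must be an equality, and tightness of ${\bf l}$ means $f_{{\bf m}'}({\bf Z}')=f_{{\bf m}'+{\bf l}}({\bf Z}')$ (in your notation), i.e.\ ${\bf Z}'$ also lies in $D_{{\bf m}'+{\bf l}}$, whose first index is $-1$ or $0$. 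Your monotonicity inequality and telescoped box constraints are both valid, the exclusion of $d_1\geq 1$ is complete (in fact sharper than the paper's case (b-1), which only obtains non-interiority), and your downward cascade does close for $n=0$ and for $(n,t)=(-1,0),(-1,1)$, where the slack $a_k-\Ibf_k{\bf v}^{\bot}$ strictly exceeds $p_k$.

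The unresolved case $(n,t)=(-1,2)$ is, however, a genuine gap rather than a loose end, and your stated goal there --- to exclude $d_1\leq -2$ outright --- is not merely delicate but false. Take the paper's $g=2$ example of \S\ref{example} ($L=14$, $\vec{\lambda}=(1,4)$, $K_{11}=26$, $K_{12}=-10$, $K_{22}=20$): the boundary point ${\bf Z}=(13,0)\in D_{(0,0)}$ has ${\bf Z}'={\bf Z}+L{\bf e}_1+2\vec{\lambda}=(29,8)\in D_{(-2,-1)}$, since $(29,8)+(-2,-1)K=(-13,8)\in D_{(0,0)}$; this is exactly your ``opposite $\Ibf_1$-walls'' configuration. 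So the lemma is irreducibly an existence statement ($(29,8)$ also lies in $D_{(-1,-1)}$, consistent with the claim), and no manipulation of the closed constraints can do what you ask of them. Your interior-plus-continuity plan can be made to work, but the mechanism that makes it work is missing from the proposal: for ${\bf Z}$ interior to $D_{\bf m}$ the box constraints on ${\bf P}$ are strict, and since $a_k-\Ibf_k{\bf v}^{\bot}=p_k$ exactly in this case, strictness combined with integrality upgrades every step of the cascade from $d_{k+1}\leq -1$ to $-p_kd_{k+1}>p_k$, hence $d_{k+1}\leq -2$; then the $k=g$ constraint yields the contradiction $-p_g>0$. Without this upgrade the interior case stalls just as the boundary case does. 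The boundary case then follows by approximating ${\bf Z}$ by interior points of $D_{\bf m}$ (a full-dimensional convex polytope), noting that only finitely many regions meet a neighborhood of ${\bf Z}'$ and each is closed, and extracting a constant subsequence of admissible indices ${\bf m}'$. With these two points supplied your proof is complete; it trades the paper's one-line choice of test vector for more transparent convexity estimates, at the price of a two-stage limiting argument in the extreme shift.
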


\begin{proof}
Without loss of generality we can assume ${\bf Z} \in D_{\bf 0}$,
namely $-\frac12{\bf l}K{\bf l}^{\bot} \leq -{\bf l}\bZ^{\bot}
\leq \frac12 {\bf l}K{\bf l}^{\bot}$ for all 
${\bf l}\in \Z^g$.
Define 
$$
  f({\bf l}, {\bf m}') 
  = 
  -{\bf l} ({\bf Z}-nL{\bf e}_1+t\vec{\lambda})^{\bot} 
  -{\bf l}K(\frac12 {\bf l}+{\bf m}')^{\bot}. 
$$
We consider the following four cases; 
(a-1): $m_1'=-2$, (a-2): $m_1'<-2$, (b-1): $m_1'=1$ or (b-2): $m_1'>1$.
We show that in case (a-1) or (b-1) there exists 
${\bf l}\in \Z^g$ such that $f({\bf l}, {\bf m}') \geq 0$
(i.e. ${\bf Z}-nL{\bf e}_1+t \vec{\lambda}$ is not in the interior of 
$D_{{\bf m}'}$), and that in case (a-2) or (b-2) there exists
${\bf l}\in \Z^g$ such that $f({\bf l}, {\bf m}') > 0$ 
(i.e. ${\bf Z}-nL{\bf e}_1+t \vec{\lambda}$ is not in $D_{{\bf m}'}$),
for $n=0,-1$ and $t=0,1,2$.\\
(a-1) or (a-2): Take ${\bf l}$ as $l_i=1$ if $m_i'\leq -2$ and 
$l_i=0$ if $m_i'\geq -1$. Then we get
\begin{align*}
f({\bf l}, {\bf m}') &\geq  
-{\bf l}K{\bf l}^{\bot} +nL-t{\bf l}\vec{\lambda}^{\bot} 
-{\bf l}K{\bf m}'^{\bot} 
\qquad (\text{since } 
-{\bf l}{\bf Z}^{\bot}\geq -\frac12 {\bf l}K{\bf l}^{\bot} )\\
&= nL- t{\bf l}\vec{\lambda}^{\bot}
+{\bf l}K(\Ibf- ({\bf m}'+{\bf l}+\Ibf))^{\bot},
\end{align*} 
where we have $-{\bf l}K({\bf m}'+{\bf l}+\Ibf)^{\bot} \geq 0$ since 
$-(m_i'+l_i+1){\bf l}K{\bf e}_i^{\bot} \geq 0$ for all $i$. Actually,
if $m_i'\leq -2$, then $-(m_i'+l_i+1) \geq 0$ and
${\bf l} K {\bf e}_i^{\bot}= K_{ii}+\sum_{j\neq i} l_j K_{ij}> 0$, 
and if $m_i'\geq -1$, then $-(m_i'+l_i+1) \leq 0$ and
${\bf l} K {\bf e}_i^{\bot}= \sum_{j\neq i} l_j K_{ij}\leq 0$.
Using ${\bf l}K\Ibf^{\bot} 
= {\bf l} ((L,0,\ldots,0,p_g) + 2 \vec{\lambda})$, 
we obtain
\begin{align*}
nL- t{\bf l}\vec{\lambda}^{\bot}
+{\bf l}K(\Ibf- ({\bf m}'+{\bf l}+\Ibf))^{\bot}
&\geq nL+ {\bf l}((L,0,\dots,0,p_g) + (2-t)\vec{\lambda})^{\bot}\\
&= (1+n)L+ l_gp_g +(2-t){\bf l}\vec{\lambda}^{\bot} 
   \qquad ( \text{since } l_1 = 1)
\end{align*}
which is non-negative 
for $n=0,-1$ and $t=0,1,2$. Thus the claim follows.
In the case of (a-2), $f({\bf l}, {\bf m}') > 0$ can be shown similarly.\\
(b-1) or (b-2): Take ${\bf l}$ as $l_i=-1$ if $m_i'\geq 1$ and 
$l_i=0$ if $m_i'\leq 0$. In the same way as (a-1) we can show 
$f({\bf l}, {\bf m}') \geq 0$ and $f({\bf l}, {\bf m}') > 0$ respectively.
\end{proof}

\begin{lemma}\label{Xn0}
(i) For any $t \in \Z$, 
there exists $n_0 \in \Z/(g+1)\Z$ such that $X_{n_0+1}^t=0$.
\\
(ii) Fix ${\bf Z}_0\in \R^g$ and set ${\bf Z}_n^t 
= {\bf Z}_0-nL{\bf e}_1 + \vec{\lambda} t$.
If ${\bf Z}_0 \in D_{\bf m}$, there exists an unique $n_1 \in \{1,\cdots,g+1\}$
such that ${\bf Z}_{n_1}^0,{\bf Z}_{n_1-1}^0\in D_{{\bf m}'}$ with $m_1'-m_1=n_1-1$. 
\end{lemma}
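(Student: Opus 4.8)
The plan is to handle the two parts separately, the first by a cyclic (cycle-lemma) argument and the second by a counting argument on the sequence of fundamental regions. For (i), I would write $D_m := W_m^t - Q_m^t$, so that the defining formula reads $X_n^t = \min_{0\le k\le g}\sum_{l=1}^{k} D_{n-l}$ and $X_{n_0+1}^t=0$ holds exactly when all backward partial sums $\sum_{m=n_0+1-k}^{n_0}D_m$ ($k=1,\dots,g$) are nonnegative. The only structural input I need is that one full period has positive total, $\sum_{m=0}^{g}D_m=\sum_n(W_n^t-Q_n^t)>0$, which holds because $\tau^t\in\mathcal{T}$ for every $t$. First I would introduce the prefix sum $P_n=\sum_{m<n}D_m$ (defined on all of $\Z$ from the periodic $D_m$), turning the condition into $P_{n_0+1}\ge P_j$ for $j=n_0+1-g,\dots,n_0$. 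Then I would take $n_0+1=i^{*}$ to be an index in $\{1,\dots,g+1\}$ that maximizes $P$ over one period. Indices inside $\{1,\dots,g+1\}$ obey $P_{i^*}\ge P_j$ by maximality, and the wrap-around indices $j\le 0$ satisfy $P_j=P_{j+g+1}-S<P_{i^*}$ since $S:=\sum_m D_m>0$ and $j+g+1\in\{i^*+1,\dots,g+1\}$. This produces the required $n_0\bmod(g+1)$.

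For (ii), I would read the statement as a counting fact about the regions visited by ${\bf Z}_n^0={\bf Z}_0-nL{\bf e}_1$. Lemma~\ref{regionZ} (with $t=0$) says each unit step $n-1\to n$ changes the first coordinate by $m_1(n)-m_1(n-1)\in\{0,1\}$, while the periodicity relation \eqref{period-D} forces $m_1(n+g+1)=m_1(n)+g$, because the first entry of $\vec{g}$ is $g$. Hence over the $g+1$ steps from $n=0$ to $n=g+1$ the quantity $m_1$ increases by exactly $g$ in increments of $0$ or $1$, so there is precisely one step, say at $n_1\in\{1,\dots,g+1\}$, with $m_1(n_1)=m_1(n_1-1)$; this gives existence and uniqueness. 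Since every earlier step raises $m_1$ by $1$, one has $m_1(n_1-1)=m_1+(n_1-1)$, i.e. $m_1'-m_1=n_1-1$.

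The remaining, and most delicate, point is to upgrade the equality $m_1(n_1-1)=m_1(n_1)$ to the assertion that ${\bf Z}_{n_1-1}^0$ and ${\bf Z}_{n_1}^0$ lie in the \emph{same} region $D_{{\bf m}'}$, rather than merely in regions with equal first coordinate. For this I would use the wall geometry of \eqref{region}: the wall attached to ${\bf l}=\pm({\bf e}_j+\cdots+{\bf e}_k)$ is $-{\bf l}{\bf Z}^{\bot}={\bf l}K({\bf m}+\tfrac12{\bf l})^{\bot}$, and under ${\bf Z}\mapsto{\bf Z}-sL{\bf e}_1$ its left-hand side changes by $sL\,l_1$. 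Thus a wall is transversal to the motion only when $l_1\ne0$, i.e. only for $j=1$, and moving in the $-{\bf e}_1$ direction one crosses only the $+({\bf e}_1+\cdots+{\bf e}_k)$ walls, each raising $m_1$ by $1$; in particular $m_1$ is monotone non-decreasing along the segment. Therefore a step with $m_1$ unchanged crosses no wall at all, the whole segment stays in one region, and the full index ${\bf m}$ is unchanged, giving ${\bf Z}_{n_1-1}^0,{\bf Z}_{n_1}^0\in D_{{\bf m}'}$. I expect this transversality observation to be the main obstacle, since it is the only step that controls all $g$ coordinates of ${\bf m}$ simultaneously rather than just $m_1$; everything else reduces to the cyclic counting driven by $\sum_n(W_n-Q_n)>0$ and the period increment $\vec{g}$.
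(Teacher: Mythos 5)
Your proof is correct and follows essentially the same route as the paper: part (i) is the paper's extremal partial-sum (cycle-lemma) argument, with a maximum of prefix sums in place of the paper's minimal backward partial sum, and part (ii) is the paper's combination of Lemma~\ref{regionZ}, the periodicity \eqref{period-D} with first entry $g$ of $\vec{g}$, and the pigeonhole principle. Your only genuine addition is the wall-transversality argument (under motion in the $-L{\bf e}_1$ direction only walls with ${\bf l}=+({\bf e}_1+\cdots+{\bf e}_k)$ can be crossed, each raising $m_1$ by at least one), which explicitly justifies that a step leaving $m_1$ unchanged stays inside a single region $D_{{\bf m}'}$ --- a point the paper leaves implicit in its appeal to the pigeonhole principle.
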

\begin{proof}
(i) Set the sequence $\{a_k\}_{k=1,2,3,\dots}$ as  
$a_k= \sum_{l=1}^k  (W_{-l}^t-Q_{-l}^t)$
and find the term which is smaller than all rear terms.
From the assumption $\sum_n Q_n^t <\sum_n W_n^t$, such term
exists within the first $g+1$ terms. Let $a_{n_0}$ be a such term
and set the sequence $\{b_k\}_{k=1,2,3,\dots}$ as
$b_k= a_{n_0+k}-a_{n_0}=\sum_{l=1}^k  (W_{-n_0-l}^t-Q_{-n_0-l}^t).$
Then $b_k \geq 0$ holds for all $k$, thus we have
$X_{-n_0}^t= \min [0, b_1,b_2,\dots,b_g ]=0$.\\
(ii)
When ${\bf Z}_0^0 = {\bf Z}_0 \in D_{\bf m}$, we have 
${\bf Z}_{g+1}^0\in D_{{\bf m}+\vec{g}}$.
From Lemma~\ref{regionZ},
each point ${\bf Z}_n^0 ~(n=1,\dots,g+1)$
is in $D_{{\bf m}_n}$ with $({\bf m}_n)_1- ({\bf m}_{n-1})_1 \in \{0,1\}$.
Thus the claim follows from the pigeonhole principle.
\end{proof}

\begin{lemma}\label{QW0}
For $(Q_n^0,W_n^0)_n \in \mathcal{T}_C$, 
there uniquely exists $n_1 \in \Z/(g+1)\Z$ 
such that either i) or ii) in the following is satisfied:
$$
  \text{i)}  ~ Q_{n_1}^0=0, 
  \qquad
  \text{ii)} ~ W_{n_1-1}^0=0, ~Q_{n_1-1}^0>0 \text{ and } Q_{n_1}^0>0.
$$
Further, let ${\bf Z}_0$ be a point in $\R^g$ such that 
$\sigma_0 \circ \iota_0 ({\bf Z}_0) = (Q_n^0,W_n^0)_n$.
Then we have the following properties: 
$n_1$ coincides with that of Lemma~\ref{Xn0} (ii);
if i) is satisfied, then 
${\bf Z}_{n_1}^0, {\bf Z}_{n_1-1}^0 \in D_{\bf m}, ~
   {\bf Z}_{n_1}^1, {\bf Z}_{n_1-1}^1 \in D_{{\bf m}'}$, 
   where ${\bf m}'-{\bf m} \in \{0,-1\}^g$ and $m_1=m_1'$;
if ii) is satisfied, then
${\bf Z}_{n_1}^0, {\bf Z}_{n_1-1}^0 \in D_{\bf m},~
   {\bf Z}_{n_1-1}^1, {\bf Z}_{n_1-2}^1 \in D_{{\bf m}'}$,
   where ${\bf m}'-{\bf m} \in \{0,-1\}^g$ and $m_1'-m_1=-1$.
\end{lemma}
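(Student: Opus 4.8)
Since Lemma~\ref{QW0} is used inside the proof that $\sigma_0\circ\iota_0$ is injective (\S\ref{part1}), I would work throughout with the given preimage: fix ${\bf Z}_0\in\R^g$ with $\sigma_0\circ\iota_0({\bf Z}_0)=(Q_n^0,W_n^0)_n$, write ${\bf Z}_n^t={\bf Z}_0-nL{\bf e}_1+t\vec{\lambda}$ as before, and let ${\bf a}_n,{\bf b}_n$ denote the labels of the fundamental regions \eqref{region} containing ${\bf Z}_n^0$ and ${\bf Z}_n^1$. The plan is to translate the two alternatives i) and ii), which are conditions on $Q_n^0$ and $W_{n-1}^0$, into statements about how the labels ${\bf a}_n,{\bf b}_n$ jump, and then to read off existence, uniqueness, and all the asserted containments simultaneously from the geometry of the $D_{\bf m}$.

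\textbf{Flat-step structure.}
First I would record, from Lemma~\ref{regionZ} together with the quasi-periodicity \eqref{period-D} (so that ${\bf a}_{n+g+1}={\bf a}_n+\vec{g}$), that as $n$ runs over one period the first coordinate $(a_n)_1$ increases by exactly $g$ in unit steps of $0$ or $1$. Hence there is a unique index, which I call $n_1$ and which is precisely the one produced by Lemma~\ref{Xn0}(ii), at which the step is \emph{flat}, $(a_{n_1})_1=(a_{n_1-1})_1$, and $g$ indices at which $(a_n)_1$ rises by one; the same holds at $t=1$ with its own flat index. I would then strengthen the first-coordinate comparison of Lemma~\ref{regionZ} to the componentwise bound ${\bf b}_n-{\bf a}_n\in\{0,-1\}^g$ for the shift by $+\vec{\lambda}$, which I expect to obtain directly from the defining inequalities \eqref{region}, the size constraints $L>2\sum_i(\lambda_i-\lambda_0)$ from \eqref{g-lambda}, and the nonnegativity of $Q_n^0,W_n^0$ forced by $C_g=0$.

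\textbf{The computation and the dichotomy.}
Using \eqref{WQ-T} with $d=0$ I would rewrite $Q_n^0=G_n^0-G_n^1$ and $W_{n-1}^0=L+G_{n-1}^1-G_n^0$, where $G_n^t=\Theta({\bf Z}_{n-1}^t)-\Theta({\bf Z}_n^t)$ is the increment of the piecewise-linear theta function along $+L{\bf e}_1$; on a flat step $G_n^t=L(a_n)_1$ exactly, while on a rising step $G_n^t$ lies strictly between $L(a_{n-1})_1$ and $L(a_n)_1$. Reading $Q_n^0$ as the signed second mixed difference of $\Theta$ over the parallelogram spanned by $L{\bf e}_1$ and $\vec{\lambda}$ and invoking concavity of $\Theta$, I would show $Q_n^0\ge 0$, and similarly $W_{n-1}^0\ge 0$, with the vanishing loci controlled by the flat indices. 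Concretely, for every $n\neq n_1$ (a rising step at $t=0$) both $Q_n^0>0$ and $W_{n-1}^0>0$, so neither i) nor ii) can hold there; at the single flat step $n_1$ the value $G_{n_1}^0=Lm_1$ is pinned, and whether the $t=1$ flat index is aligned at $n_1$ or shifted to $n_1-1$ forces exactly one of $Q_{n_1}^0=0$ (case i), with $G_{n_1}^1=Lm_1$, or $W_{n_1-1}^0=L+L(m_1-1)-Lm_1=0$ together with $Q_{n_1-1}^0,Q_{n_1}^0>0$ (case ii).

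\textbf{Assembly and main obstacle.}
Existence and uniqueness of $n_1$ then follow at once: i) or ii) can occur only at the flat step, and exactly one of them does. In each case the required containments are just the record of the alignment/shift of flat indices: in case i) the $t=1$ flat step is again at $n_1$, so ${\bf Z}_{n_1}^1,{\bf Z}_{n_1-1}^1$ share a region $D_{{\bf m}'}$ with $m_1'=m_1$, whereas in case ii) it is at $n_1-1$, so ${\bf Z}_{n_1-1}^1,{\bf Z}_{n_1-2}^1$ share $D_{{\bf m}'}$ with $m_1'-m_1=-1$; the componentwise bound of the second step gives ${\bf m}'-{\bf m}\in\{0,-1\}^g$ in both, and matching $n_1$ with the carrier-emptying index of Lemma~\ref{Xn0}(i) amounts to identifying the flat step with $X_{n_0+1}^0=0$. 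I expect the genuine obstacle to be the third step: pinning the \emph{exact} vanishing of $Q_n^0$ and $W_{n-1}^0$, rather than merely bounding their ranges, requires a careful evaluation of the piecewise-linear mixed differences of $\Theta$, and it is here that the strict monotonicity $\lambda_0<\lambda_1<\cdots<\lambda_g$ and the explicit inequalities \eqref{region} must be used to exclude spurious degenerate alignments at the rising steps.
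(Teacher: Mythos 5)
Your proposal diverges from the paper in a way that creates a real gap, and the gap sits exactly at the step you defer to the end. The load-bearing claim of your argument --- that at every rising step $n\neq n_1$ both $Q_n^0>0$ and $W_{n-1}^0>0$, so that i) or ii) can occur only at the flat step --- is false. The strict betweenness you invoke (``on a rising step $G_n^t$ lies strictly between $L(a_{n-1})_1$ and $L(a_n)_1$'') fails whenever some ${\bf Z}_n^t$ lies on a wall of its fundamental region, and such configurations are genuine points of ${\rm Im}\,(\sigma_0\circ\iota_0)$, not removable degeneracies: the paper's own example \eqref{example-init} has $Q_0^0=W_2^0=0$, i.e.\ a vanishing $W$ coexisting with case i), and tracing through \eqref{WQ-T} this forces the increment of $\Theta$ along a rising step at $t=1$ to sit at its extreme value rather than strictly between. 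Worse for your argument, if ${\bf Z}_{n_1}^0$ lies on the common wall of $D_{\bf m}$ and a region whose first coordinate is $m_1+1$, then $W_{n_1}^0=0$ while the step $n_1\to n_1+1$ is rising, so ``$W_{n-1}^0>0$ at every rising step'' fails at $n=n_1+1$. In all these situations uniqueness of $n_1$ survives only because of the positivity clauses built into condition ii) ($Q_{n_1-1}^0>0$ and $Q_{n_1}^0>0$): the spurious index is excluded because the adjacent $Q$ vanishes, not because the $W$ is positive. Your dichotomy never invokes these clauses, and it cannot be repaired by sharpening the theta computation alone; what is needed is a statement like ``if $W_{n-1}^0=0$ at a rising step then $Q_{n-1}^0=0$ or $Q_n^0=0$,'' which is a zero-pattern analysis. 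The paper supplies exactly this in its Step 1, purely combinatorially: $C_g=0$ forces some $Q_{n_1}$ or $W_{n_1-1}$ to vanish, and $C_{g-1}>2C_g=0$ confines all zeros to the pairs $\{Q_{n_1},W_{n_1-1}\}$ or $\{Q_{n_1},W_{n_1}\}$, both of which fall under i).

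There is also a logically prior scope problem: you prove the existence/uniqueness assertion only for $(Q_n^0,W_n^0)_n$ in the image of $\sigma_0\circ\iota_0$, whereas the lemma asserts it for every point of $\mathcal{T}_C$. Surjectivity of $\iota_\sigma$ is not available here --- it is the content of Theorem \ref{J-Toda}, proved later using this lemma --- so restricting to the image is a genuine loss of generality, and it is precisely the conserved-quantity argument that makes the first claim unconditional. The paper's structure is: Step 1 (combinatorial, all of $\mathcal{T}_C$), Step 2 (the region/flat-step analysis, essentially your argument, valid for image points), Step 3 (the two indices coincide because each is unique). Two smaller points: your componentwise bound ${\bf m}'-{\bf m}\in\{0,-1\}^g$ is asserted (``I expect to obtain\dots'') rather than proved, and the index of Lemma \ref{Xn0}(i) (where $X_{n_0+1}^0=0$) is in general different from the flat index of Lemma \ref{Xn0}(ii); the lemma requires identifying $n_1$ with the latter only, so your closing remark about the ``carrier-emptying index'' conflates $n_0$ with $n_1$.
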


\begin{proof}
Step 1: We show the first claim on $(Q_n^0,W_n^0)_n \in \mathcal{T}_C$. 
Recall that we have assumed $C_{g-1}>2C_g=0$, where $C_{g-1}$ and $C_g$ are given as \eqref{C's}.
From $C_g=0$, there exists $n_1$ such that $Q_{n_1}^0=0$ or $W_{n_1-1}^0=0$.
(a) Suppose $Q_{n_1}^0=0$, then
$$\min[\min_{i \neq n_1}Q_i, \min_{j\neq n_1-1,n_1}W_j ]\geq C_{g-1} >0,$$
and therefore $Q_i>0$ for $i\neq n_1$ and $W_j>0$ for $j\neq n_1-1,n_1$. 
(b) Suppose $W_{n_1'}=0$, then
$$\min[\min_{j \neq n_1'}W_j , \min_{i \neq n_1',n_1'+1} Q_i ] \geq C_{g-1} 
>0,$$
and therefore $W_j>0$ for $j\neq n_1'$ and $Q_i>0$ for $i\neq n_1',n_1'+1$.

The case where $Q_{n_1}=W_{n_1'}=0$ may occur only if 
$n_1=n_1'$ or $n_1=n_1'+1$, Both cases belong to i) not to ii). 

\noindent
Step 2: We assume ${\bf Z}_0\in D_{\bf 0}$ without loss of generality.
In the following we define $D_{(n_1,\ast)}$ as
$$
   D_{(n_1,\ast)} = 
   \bigcup_{n' \in \Z^g \text{ s.t. } n'_1 = n_1} D_{{\bf n}'}.
$$   
From Lemma \ref{Xn0} (ii), 
there exists unique $n_1 \in \{1, \cdots ,g+1\}$ such that 
${\bf Z}_{n_1}^0,{\bf Z}_{n_1-1}^0\in D_{\bf m}$ with $m_1=n_1-1$. 
From Lemma~\ref{regionZ} we have ${\bf Z}_{n_1-2}^0, {\bf Z}_{n_1-2}^1 
\in D_{(m_1-1,*)}$. 
Further, we have ${\bf Z}_{n_1}^1 \in D_{(m_1,*)}$,
indeed if ${\bf Z}_{n_1}^1 \in D_{(m_1-1,*)}$,
then ${\bf Z}_{n_1}^1, {\bf Z}_{n_1-1}^1,{\bf Z}_{n_1-2}^1
\in D_{(m_1-1,*)}$, which is a contradiction.
Thus ${\bf Z}_{n_1-1}^1$ belongs to $D_{(m_1,*)}$ or 
$D_{(m_1-1,*)}$.
If ${\bf Z}_{n_1-1}^1 \in D_{(m_1,*)}$, then 
\begin{align*}
Q_{n_1}^0&
=(\Theta({\bf Z}_{n_1-1}^0)-\Theta({\bf Z}_{n_1}^0))
-(\Theta({\bf Z}_{n_1-1}^1)-\Theta({\bf Z}_{n_1}^1))\\
&=m_1 L-m_1 L=0, 
\end{align*}
and if
${\bf Z}_{n_1-1}^1 \in D_{(m_1-1,*)}$ then 
\begin{align*}
W_{n_1-1}^0&
=L+(\Theta({\bf Z}_{n_1}^0)-\Theta({\bf Z}_{n_1-1}^0))-
(\Theta({\bf Z}_{n_1-1}^1)-\Theta({\bf Z}_{n_1-2}^1))\\
&=L-m_1 L+(m_1-1)L=0. 
\end{align*}

\noindent
Step 3. Since both $n_1$ in Step 1 and 2 are unique, they coincide.
\end{proof}

\begin{proposition}\label{sigma_inj}
The map $\sigma_t|_{{\rm Im} \iota_t}$ is injective.
\end{proposition}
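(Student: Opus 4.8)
The plan is to show that the image point $(Q_n^0,W_n^0)=\sigma_0\circ\iota_0(\bZ_0)\in\mathcal T_C$ determines $\bZ_0$ uniquely, so that two points of $\mathrm{Im}\,\iota_0$ with the same image under $\sigma_0$ must coincide as sequences $(T_n^0,T_n^1)_n$; by the commutativity \eqref{iota-T} it suffices to treat $t=0$. First I would normalize, assuming $\bZ_0\in D_{\bf 0}$ without loss of generality, and record that for each $n$ the point $\bZ_n^0=\bZ_0-nL{\bf e}_1$ lies in a region $D_{{\bf m}_n}$ on which $\Theta$ is the affine function $\bZ\mapsto \tfrac12{\bf m}_nK{\bf m}_n^\bot+{\bf m}_n\bZ^\bot$. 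By Lemma~\ref{regionZ} consecutive labels satisfy $({\bf m}_n)_1-({\bf m}_{n-1})_1\in\{0,1\}$, and by \eqref{period-D} one full period $n\mapsto n+g+1$ advances the label by $\vec{g}$; hence exactly one of the $g+1$ increments ${\bf m}_n-{\bf m}_{n-1}$ over a period vanishes, the remaining $g$ each having first coordinate $1$.

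The key step is to recover the label sequence $\{{\bf m}_n\}$ from $(Q_n^0,W_n^0)$ alone. Using the first part of Lemma~\ref{QW0} I would locate the distinguished index $n_1$ purely from the $(Q,W)$ data, and then invoke the region relations of Lemma~\ref{QW0} together with Lemma~\ref{Xn0}(ii) to identify the vanishing increment: $\bZ_{n_1}^0$ and $\bZ_{n_1-1}^0$ lie in a common region $D_{\bf m}$ with $m_1=n_1-1$. Substituting the affine form of $\Theta$ into \eqref{WQ-T} turns each $Q_n^0$ and $W_n^0$ into an explicit expression that is linear in the increment ${\bf m}_n-{\bf m}_{n-1}$ (with a leading term $({\bf m}_n-{\bf m}_{n-1})\vec{\lambda}^\bot$); because the generic condition \eqref{CD-condition} forces $\lambda_0<\lambda_1<\cdots<\lambda_g$ and $L$ large, distinct admissible increments give distinct values of $(Q_n^0,W_n^0)$. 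This lets me read off every increment, and hence the whole sequence $\{{\bf m}_n\}$, as a function of the image alone; in particular the same label sequence is obtained from any preimage.

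With the labels in hand, the relations
\begin{align*}
T_n^0=\tfrac12{\bf m}_nK{\bf m}_n^\bot+{\bf m}_n(\bZ_0-nL{\bf e}_1)^\bot
\end{align*}
over one period $n=1,\dots,g+1$ form a linear system for $\bZ_0\in\R^g$ whose coefficient vectors are the gradients ${\bf m}_n$. Since the $g$ nonzero increments ${\bf m}_n-{\bf m}_{n-1}$ span $\R^g$, the system has rank $g$ and a unique solution, exactly as in the two independent equations of the $g=2$ example; consequently $\bZ_0$ is determined by $(Q_n^0,W_n^0)$, the sequences $(T_n^0,T_n^1)_n$ coincide for the two preimages, and $\sigma_0|_{\mathrm{Im}\,\iota_0}$ is injective. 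The main obstacle is the middle step: proving that the admissible label increments are recovered unambiguously from $(Q,W)$, ruling out two different transition patterns producing identical $Q_n^0,W_n^0$, which is precisely where \eqref{CD-condition} and the careful case analysis of Lemma~\ref{QW0} are indispensable; a secondary point is verifying that the recovered increments genuinely span $\R^g$ rather than a hyperplane, since otherwise $\bZ_0$ would be left undetermined along one direction.
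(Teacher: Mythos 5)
Your overall frame (normalize modulo the lattice, invoke Lemmas~\ref{regionZ}, \ref{Xn0}, \ref{QW0}, then pin things down by linear equations) resembles the paper's, but the step your argument hinges on is false. You claim that substituting the affine form of $\Theta$ into \eqref{WQ-T} makes $Q_n^0$ and $W_n^0$ explicit functions of the single increment ${\bf m}_n-{\bf m}_{n-1}$ of the $t=0$ labels, with leading term $({\bf m}_n-{\bf m}_{n-1})\vec{\lambda}^{\bot}$, so that the whole label sequence can be read off from $(Q,W)$. But the formulas \eqref{WQ-T} involve $T_{n-1}^{1}$ and $T_{n}^{1}$ as well, so $Q_n^0$ and $W_n^0$ also depend on which regions contain $\bZ_{n-1}^1,\bZ_n^1$, and these generally differ from the $t=0$ regions: Lemma~\ref{QW0} itself records shifts by vectors in $\{0,-1\}^g$ between $t=0$ and $t=1$. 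The identity $Q_n=({\bf m}_n-{\bf m}_{n-1})\vec{\lambda}^{\bot}$ holds only when the labels are $t$-independent, i.e. when $X_n^t=0$ for all $n$ --- exactly the regime the paper manufactures artificially via the $\delta$-perturbation and the limit $1\ll t\ll\delta$ in the proof of Proposition~\ref{iota_inj} (Lemma~\ref{Dtilde}); it fails for generic initial data. The paper's own example is a counterexample: there $\vec{\lambda}=(1,4)$, so the admissible values of $(\mathrm{increment})\cdot\vec{\lambda}^{\bot}$ are $0,1,5$, yet $Q_1^0=2$. Hence the increments cannot be recovered from $(Q_n^0,W_n^0)$ in the way you propose, and the identification of the label sequence --- the core of your proof --- collapses.

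There is also a secondary, logical problem: in your last step you treat the values $T_n^0$ as known data in the linear system $T_n^0=\frac12{\bf m}_nK{\bf m}_n^{\bot}+{\bf m}_n(\bZ_0-nL{\bf e}_1)^{\bot}$, but for injectivity of $\sigma_0|_{\mathrm{Im}\,\iota_0}$ only $(Q_n^0,W_n^0)_n$ is given; the $T_n^0$ are precisely what must be shown to coincide for two preimages, so using them as input is circular. The paper's proof avoids both difficulties by never recovering the label sequence at all: from $Q_n^0+W_n^0=L+T_{n-1}^0+T_{n+1}^0-2T_n^0$ and $Q_n^0=T_n^1-T_{n-1}^1-(T_n^0-T_{n-1}^0)$ (equations \eqref{QW_T1}, \eqref{QW_T2}) one reconstructs $T_n^0$ and $T_n^1$ from $(Q,W)$ up to an affine-in-$n$ ambiguity, and that ambiguity is fixed by just two anchors: $T_{n_0}^0=T_{n_0}^1=0$, coming from $X_{n_0+1}^0=0$ and the normalization $\bZ_{n_0}^0\in D_{\bf 0}$, and the slope $T_{n_1}^0-T_{n_1-1}^0=-(n_1-n_0-1)L$, where $n_1$ and the common region $D_{\bf m}$ with $m_1=n_1-n_0-1$ are located from the $(Q,W)$ data alone by Lemma~\ref{QW0}. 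If you wish to keep your route, you would need to prove that both the $t=0$ and $t=1$ label sequences are determined by $(Q,W)$; that is essentially the hard content the paper postpones to the $\delta$-trick of \S\ref{part2}, and it is not available at this stage.
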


\begin{proof}
It is enough to show the claim for $t=0$.
From the equation \eqref{WQ-T}
with the quasi-periodicity \eqref{T-qperiod}, 
we have
\begin{align}
\label{QW_T1}
T_{n'+1}^0-T_{n'}^0&=T_{n+1}^0-T_{n}^0-(n'-n)L
 +\sum_{j=n+1}^{n'}(Q_j^0+W_j^0) \quad (n, n' \in \Z, ~n \leq n'),\\
\label{QW_T2}
T_{n}^1-T_{n-1}^1&=T_{n}^0-T_{n-1}^0+Q_n^0 \quad (n\in \Z).
\end{align}
Take $n_0$ as $X_{n_0+1}^0=0$ (Lemma~\ref{Xn0}) and 
$n_1$ as Lemma~\ref{QW0} with
$1< n_1-n_0\leq g+1$.
Without loss of generality,  we can assume ${\bf Z}_{n_0}^0\in D_{\bf 0}$.
From $X_{n_0+1}^0=0$ and \eqref{UD-ptau}, we have 
$T_{n_0}^{2}-T_{n_0}^1=T_{n_0}^1-T_{n_0}^0$
and therefore we can also assume 
${\bf Z}_{n_0}^1$, ${\bf Z}_{n_0}^2$ $\in$ $D_{\bf 0}$.
From Lemma \ref{QW0}, 
we have ${\bf Z}_{n_1}^0, {\bf Z}_{n_1-1}^0 \in D_{\bf m}$ with $m_1=n_1-n_0-1$.

From \eqref{QW_T1}, $T_{n_0}^0=0$ and
$T_{n_1}^0-T_{n_1-1}^0 = -L(n_1-1-n_0),$
we have
\begin{align*}
T_n^0=& -\frac12(n-n_0)(n-n_0-1)L-S_n,
\end{align*}
where
$$
S_n=\left\{ \begin{array}{ll} 
\sum_{k=1}^{n-n_0} \sum_{j=n_0+k}^{n_1-1} (Q_j^0+W_j^0)
& \quad (n_0 \leq n \leq n_1)\\
\sum_{j=n_0+1}^{n_1-1}(j-n_0)(Q_j^0+W_j^0)
-\sum_{j=n_1}^{n-1} (n-j)(Q_j^0+W_j^0) &\quad (n \geq n_1)
\end{array}\right..
$$
We also have $T_n^t$ for $n<n_0$ by the quasi-periodicity.
Further, from \eqref{QW_T2} and $T_{n_0}^0=T_{n_0}^1=0$, we have
$$T_n^1=T_n^0+\sum_{j=n_0+1}^n Q_j^0.$$
\end{proof}

%%%%%%%%%%%%%%%%%%%%%%%%%%%%%%%%%%%%%%%%%%%%%%%%
\subsection{Proof of Proposition~\ref{iota_inj}}
%%%%%%%%%%%%%%%%%%%%%%%%%%%%%%%%%%%%%%%%%%%%%%%%
\label{part2}

We take $n_0$ as $X_{n_0+1}^0=0$ (Lemma~\ref{Xn0}), and 
set ${\bf Z}_{n_0}^0 \in D_{{\bf 0}}$ without loss of generality.
Then ${\bf Z}_{n_0}^1$ also belongs to $D_{{\bf 0}}$.
Take also $n_1$ as Lemma~\ref{QW0} with
$1< n_1-n_0\leq g+1$.
From \eqref{region}, Lemmas~\ref{regionZ} and \ref{Xn0}, we have 

\begin{align*}
  &{\bf Z}_{n_0}^0 \in D_{{\bf 0}} \\
  &{\bf Z}_n^0 \in D_{{\bf m}_n} ~~(n=1,\cdots,g) \\
  &{\bf Z}_{n_0+g+1}^0 \in D_{\vec{g}},
\end{align*}
where
\begin{align*}%\label{msk}
  &
  {\bf m}_n = \left\{ \begin{array}{ll}
  \sum_{k=1}^{n} \Ibf_{s(k)} & (n_0+1 \leq n \leq n_1-1)\\
   \sum_{k=1}^{n-1} \Ibf_{s(k)} & (n_1 \leq n \leq n_0+g+1) \\
  \end{array}\right.
\end{align*}
for some permutation $s \in \mathfrak{S}_g$.
Note that $\sum_{k=1}^g \Ibf_k = 
\sum_{k=1}^g \Ibf_{\sigma(k)} = \vec{g}$.

Let $\delta$ be a large positive number.
Set $\tilde{{\bf Z}}_n^t$ as 
\begin{align}\label{Z-delta}
\tilde{{\bf Z}}_n^t={\bf Z}_n^t-\delta (n-n_0) {\bf e}_1
={\bf Z}_{n_0}^0 -(L+\delta)(n-n_0) {\bf e}_1+t\vec{\lambda}
\end{align}
and $\tilde{K}$ as
$$\tilde{K}=K+\delta J,\quad J=\left(\begin{array}{ccccc}
2&-1&&&\\
-1&2&-1&&\\
&\ddots&\ddots&\ddots&\\
&&-1&2&-1\\
&&&-1&2
\end{array}\right),$$
where the blank parts in $J$ means zero.
For ${\bf m} \in \R^g$, let $\tilde{D}_{{\bf m}}$ be the fundamental
regions determined by $\tilde{K}$.

\begin{lemma}\label{Ztilde}
If ${\bf Z}_n^t \in D_{{\bf m}}$, then $\tilde{{\bf Z}}_n^t$ belongs to 
$\tilde{D}_{{\bf m}}$ for all $n\in \Z$ and $t=0,1$.
\end{lemma}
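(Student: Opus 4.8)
The plan is to reduce the assertion $\tilde{\bf Z}_n^t\in\tilde D_{\bf m}$ to a single combinatorial inequality that is linear in $\delta$, and then to settle that inequality from the explicit shape of the region indices ${\bf m}_n$. Since $\tilde{\bf Z}_n^t={\bf Z}_n^t-\delta(n-n_0){\bf e}_1$ and $\tilde K=K+\delta J$ are affine in $\delta$, while the fundamental regions for $K$ and $\tilde K$ are cut out by the same family of vectors ${\bf l}=\pm({\bf e}_j+\cdots+{\bf e}_k)$ through \eqref{region}, I would compare the two region conditions for each such ${\bf l}$. Writing $l_1={\bf l}{\bf e}_1^{\bot}$, the defining inequality of $\tilde D_{\bf m}$ is
\[
-{\bf l}({\bf Z}_n^t)^{\bot}+\delta(n-n_0)l_1
\le {\bf l}K({\bf m}+\tfrac12{\bf l})^{\bot}+\delta\,{\bf l}J({\bf m}+\tfrac12{\bf l})^{\bot}.
\]
As ${\bf Z}_n^t\in D_{\bf m}$ already gives $-{\bf l}({\bf Z}_n^t)^{\bot}\le{\bf l}K({\bf m}+\tfrac12{\bf l})^{\bot}$ and $\delta>0$, it suffices to prove the $\delta$-coefficient inequality
\[
(n-n_0)\,l_1\le{\bf l}J({\bf m}+\tfrac12{\bf l})^{\bot}\qquad(\ast)
\]
for all ${\bf l}=\pm({\bf e}_j+\cdots+{\bf e}_k)$, where ${\bf m}={\bf m}_n$ is the index of the region containing ${\bf Z}_n^t$; note that $(\ast)$ depends on $t$ only through ${\bf m}$.

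Next I would make ${\bf l}J$ explicit: it is the discrete Laplacian of the interval $[j,k]$, namely ${\bf l}J=\epsilon(-{\bf e}_{j-1}+{\bf e}_j+{\bf e}_k-{\bf e}_{k+1})$ for ${\bf l}=\epsilon({\bf e}_j+\cdots+{\bf e}_k)$ with $\epsilon=\pm1$ and ${\bf e}_0={\bf e}_{g+1}=0$. Hence ${\bf l}J{\bf l}^{\bot}=2$ and ${\bf l}J({\bf m}+\tfrac12{\bf l})^{\bot}=\epsilon(-m_{j-1}+m_j+m_k-m_{k+1})+1$ with $m_0=m_{g+1}=0$, so $(\ast)$ turns into a statement about the adjacent differences of ${\bf m}$ and about $m_1$ versus $n-n_0$. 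The decisive input is the shape of the indices ${\bf m}_n$, which I claim are unit-step partitions: recorded above for $t=0$ and obtained by the same reasoning for $t=1$ (as ${\bf Z}_{n_0}^1\in D_{\bf 0}$ too), moving along the line $n\mapsto{\bf Z}_n^t={\bf Z}_{n_0}^t-(n-n_0)L{\bf e}_1$ crosses only facets with $l_1\ne0$, i.e.\ ${\bf l}=\pm\Ibf_k$, so each ${\bf m}_n$ is a sum of prefix vectors $\Ibf_k$. By Lemma~\ref{regionZ} the first coordinate grows by $0$ or $1$ at each step, so at most one prefix is added per step; combining with ${\bf m}_{n_0+g+1}=\vec g=\sum_{k=1}^g\Ibf_k$ from \eqref{period-D} and the uniqueness of that decomposition, precisely one step (at $n_1$, Lemma~\ref{Xn0}(ii)) is null and the others add distinct prefixes. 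Thus $m_i-m_{i+1}\in\{0,1\}$ for $1\le i\le g$ and $m_1\in\{n-n_0-1,\,n-n_0\}$.

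With these two facts $(\ast)$ is immediate. Setting $a=m_{j-1}-m_j$ and $b=m_k-m_{k+1}$, both in $\{0,1\}$, the case $j\ge2$ (so $l_1=0$) requires $\epsilon(a-b)\le1$, which holds because $a-b\in\{-1,0,1\}$; the case $j=1$ (so $m_{j-1}=0$ and $l_1=\epsilon$) requires $\epsilon(n-n_0)\le\epsilon(m_1+b)+1$, which follows for $\epsilon=+1$ from $m_1\ge n-n_0-1,\ b\ge0$ and for $\epsilon=-1$ from $m_1\le n-n_0,\ b\le1$. This proves the lemma for $n_0\le n\le n_0+g+1$. The remaining $n$ follow by quasi-periodicity: since $\vec g J=(g+1){\bf e}_1$ one has $\vec g\tilde K=(g+1)(L+\delta){\bf e}_1$, whence $\tilde{\bf Z}_{n+g+1}^t=\tilde{\bf Z}_n^t-\vec g\tilde K$, so $\tilde{\bf Z}_n^t\in\tilde D_{\bf m}$ propagates to $\tilde{\bf Z}_{n+g+1}^t\in\tilde D_{{\bf m}+\vec g}$ in accordance with \eqref{period-D}.

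I expect the main obstacle to be the middle step: certifying rigorously that the indices ${\bf m}_n$ are genuine unit-step partitions and that $m_1$ tracks $n-n_0$ to within one, for both $t=0$ and $t=1$, since the entire inequality $(\ast)$ rests on these two combinatorial facts. This requires reading off the facet structure of \eqref{region} along the ${\bf e}_1$-direction and feeding in the unit-step control of Lemma~\ref{regionZ} together with the period relation \eqref{period-D}; by contrast the final case analysis is routine. I would also take care to confirm that the affine $\delta$-comparison is genuinely sufficient, i.e.\ that adding the known $D_{\bf m}$-inequality to $(\ast)$ reproduces each $\tilde D_{\bf m}$-inequality exactly.
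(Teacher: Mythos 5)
Your proposal is correct and takes essentially the same route as the paper's proof: reduce each defining inequality of $\tilde{D}_{\bf m}$ to the known $D_{\bf m}$-inequality plus a $\delta$-coefficient inequality for every interval vector ${\bf l}$, then settle the latter using the fact (established in the setup of \S\ref{part2} from \eqref{region}, Lemma~\ref{regionZ} and Lemma~\ref{Xn0}) that each ${\bf m}_n$ is a sum of distinct prefix vectors $\Ibf_{s(k)}$, so that $m_1$ tracks $n-n_0$ within one and adjacent differences lie in $\{0,1\}$ --- your case analysis in $a,b$ is exactly the paper's bound on ${\bf l}J{\bf m}^{\bot}$ in different notation. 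Your explicit extension to all $n\in\Z$ via $\vec{g}\tilde{K}=(g+1)(L+\delta){\bf e}_1$ is a point the paper leaves implicit, but it is a minor addition rather than a different method.
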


\begin{proof}
From \eqref{region} it is enough to show 
\begin{align} \label{zt1}
-{\bf l}({\bf Z}_n^t -\delta (n-n_0) {\bf e}_1)^{\bot}
\leq {\bf l}(K+\delta J)({\bf m} + \frac12 {\bf l})^{\bot},
\end{align}
under the constraints
$$ -{\bf l} ({\bf Z}_n^t)^{\bot} 
\leq {\bf l}K({\bf m} + \frac12 {\bf l})^{\bot},$$
for any ${\bf l}=\Ibf_j - \Ibf_i $ ($0\leq i , j \leq g$),
$n_0+1 \leq n \leq n_0+g+1$ and $t=0,1$, where ${\bf m}$ is 
$\sum_{k=1}^{n-n_0} \Ibf_{s(k)}$ or $\sum_{k=1}^{n-n_0-1} \Ibf_{s(k)}$.
"R.h.s $-$ l.h.s" of \eqref{zt1} is greater than or equal to
\begin{align*}
\delta {\bf l}J({\bf m}+\frac12 {\bf l})^{\bot}-\delta(n-n_0)l_1,
\end{align*}
which can be shown to be nonnegative by using the formulae
\begin{align}\label{zt2}
J \Ibf_k&= \left\{\begin{array}{ll}
{\bf e}_1+{\bf e}_k- {\bf e}_{k+1} &(1\leq k \leq g-1) \\
 {\bf e}_1+{\bf e}_g &(k=g),
\end{array}\right.
\qquad \frac{1}{2}{\bf l}K{\bf l}^{\bot}=1, 
\end{align}
and
\begin{align*}
{\bf l}J{\bf m}^{\bot} &\geq \left\{\begin{array}{ll}
n-n_0-1 &(l_1=1) \\
 -1 &(l_1=0)\\
-n+n_0-1 &(l_1=-1)
\end{array}\right. .
\end{align*}
\end{proof}

\begin{lemma}\label{QWtilde}
Define $\tilde{Q}_n^t$ and $\tilde{W}_n^t$ as
\begin{align*}
\tilde{Q}_n^t&=
 \tilde{\Theta}(\tilde{{\bf Z}}_{n-1}^{t}) 
 + \tilde{\Theta}(\tilde{{\bf Z}}_{n}^{t+1}) 
 - \tilde{\Theta}(\tilde{{\bf Z}}_{n-1}^{t+1}) 
 - \tilde{\Theta}(\tilde{{\bf Z}}_{n}^{t}),\\
 \tilde{W}_n^t &= L+\delta + \tilde{\Theta}(\tilde{{\bf Z}}_{n-1}^{t+1}) 
 + \tilde{\Theta}(\tilde{{\bf Z}}_{n+1}^{t})
 - \tilde{\Theta}(\tilde{{\bf Z}}_{n}^{t}) 
 - \tilde{\Theta}(\tilde{{\bf Z}}_{n}^{t+1}),
\end{align*} 
where $\tilde{\Theta}(\tilde{Z}) = \Theta(\tilde{Z};\tilde{K})$.
Then $\tilde{Q}_n^0=Q_n^0$ and $\tilde{W}_n^0=W_n^0$ hold except
$\tilde{W}_{n_0}^0=W_{n_0}^0+\delta$.
\end{lemma}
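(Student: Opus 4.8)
The plan is to linearize both theta functions on the cells furnished by Lemma~\ref{Ztilde} and reduce the statement to a short cancellation among the $\delta$-terms. By Lemma~\ref{Ztilde}, if ${\bf Z}_n^t\in D_{{\bf m}_n^t}$ then $\tilde{{\bf Z}}_n^t\in\tilde D_{{\bf m}_n^t}$ with the \emph{same} label ${\bf m}_n^t$, for $t=0,1$; on these cells $\Theta$ and $\tilde\Theta$ are affine, so, using $\tilde K-K=\delta J$ and $\tilde{{\bf Z}}_n^t-{\bf Z}_n^t=-\delta(n-n_0){\bf e}_1$,
\[
  \tilde\Theta(\tilde{{\bf Z}}_n^t)-\Theta({\bf Z}_n^t)=\delta\,\Delta_n^t,\qquad
  \Delta_n^t:=\tfrac12{\bf m}_n^t J({\bf m}_n^t)^{\bot}-(n-n_0)({\bf m}_n^t)_1 .
\]
Substituting this into the definitions of $\tilde Q_n^0,\tilde W_n^0$, the four $\Theta$-parts reassemble exactly into $Q_n^0,W_n^0$ via \eqref{WQ-T} with $d=C_g=0$, while the explicit $L+\delta$ in $\tilde W_n^0$ leaves an extra $+\delta$. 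The lemma is therefore equivalent to
\[
  \Delta_{n-1}^0+\Delta_n^1-\Delta_{n-1}^1-\Delta_n^0=0,
\]
\[
  \Delta_{n-1}^1+\Delta_{n+1}^0-\Delta_n^0-\Delta_n^1
  =\begin{cases}-1 & (n\neq n_0),\\ 0 & (n=n_0).\end{cases}
\]

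The point is that $\Delta_n^t$ has a skip-independent closed form on one period. By Lemma~\ref{regionZ}, as $n$ runs from $n_0$ to $n_0+g+1$ the labels increase from ${\bf 0}$ to $\vec g$ in steps $\Ibf_{s(k)}$ with exactly one skip, so each ${\bf m}_n^t$ is a sum of \emph{distinct} $\Ibf_k$'s, for both $t=0$ and $t=1$. Using $\Ibf_jJ\Ibf_k^{\bot}=1$ for $j\neq k$ and $\Ibf_kJ\Ibf_k^{\bot}=2$ (both read off from \eqref{zt2}), the quadratic form $\tfrac12{\bf m}_n^t J({\bf m}_n^t)^{\bot}$ depends only on $c:=({\bf m}_n^t)_1\in\{n-n_0,\,n-n_0-1\}$ and equals $\tfrac12c(c+1)$; hence
\[
  \Delta_n^t=\tfrac12c(c+1)-(n-n_0)c=-\tfrac12(n-n_0)(n-n_0-1)=:h(n)
\]
for $n_0\le n\le n_0+g+1$, independently of $t$ and of where the skip sits. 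The first identity is then immediate from $\Delta_n^0=\Delta_n^1$, and the second reduces to the second difference $h(n-1)+h(n+1)-2h(n)=-1$ of the quadratic $h$, valid for every interior $n$.

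The single exception at $n=n_0$ comes from the periodic wrap. For $n=n_0$ the term $\Delta_{n_0-1}^t$ is needed, and there the label has wrapped to ${\bf m}_{n_0-1}^t={\bf m}_{n_0+g}^t-\vec g=-\Ibf_{m}$ for some index $m$, a single \emph{negative} $\Ibf$, for which $\tfrac12{\bf m}J{\bf m}^{\bot}=1$ rather than the value $0$ predicted by $\tfrac12c(c+1)$ at $c=-1$; consequently $\Delta_{n_0-1}^0=\Delta_{n_0-1}^1=0$. This breaks the quadratic behaviour of $h$ exactly once, so the $\tilde W$ second difference at $n=n_0$ equals $\Delta_{n_0-1}^1+h(n_0+1)-2h(n_0)=0$ instead of $-1$, producing $\tilde W_{n_0}^0=W_{n_0}^0+\delta$, while the $\tilde Q$ identity still holds at $n_0$ because $\Delta_{n_0-1}^0=\Delta_{n_0-1}^1$. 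I expect the main obstacle to be purely organizational: verifying, once and cleanly, that the non-negative-label regime $n_0\le n\le n_0+g+1$ carries the clean quadratic $h$ for both $t=0,1$, and that the wrap contributes the anomaly at precisely the one index $n_0$ in each period; the reassembly of the $\Theta$-parts into $Q,W$ in the first step is routine.
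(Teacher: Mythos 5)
Your proposal is correct and takes essentially the same route as the paper's proof: both linearize $\Theta$ and $\tilde\Theta$ on the matching cells supplied by Lemma~\ref{Ztilde}, establish the closed form $\tilde\Theta(\tilde{{\bf Z}}_n^t)-\Theta({\bf Z}_n^t)=-\tfrac{\delta}{2}(n-n_0)(n-n_0-1)$ for $n_0\le n\le n_0+g+1$ (and $=0$ at $n=n_0-1$) using $\tfrac12{\bf m}J{\bf m}^{\bot}=\tfrac12 m_1(m_1+1)$ for labels that are sums of distinct $\Ibf_k$'s, and then read the lemma off from second differences of this quadratic. The only minor slip is at the wrap: the label at $n_0-1$ can also be ${\bf 0}$ (when the skip sits at $n_1=n_0+g+1$), not only $-\Ibf_m$, but in that case $\Delta_{n_0-1}^t=0$ holds trivially, so your conclusion is unaffected.
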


\begin{proof}
From Lemma~\ref{Ztilde} 
we obtain the following formula;
$$\tilde{\Theta}(\tilde{{\bf Z}}_n^t)-\Theta({\bf Z}_n^t)
= \left\{\begin{array}{ll}
 -\frac{\delta}{2}(n-n_0)(n-n_0-1)
&(n_0\leq n \leq n_0+g+1)\\
0& (n=n_0-1)\end{array}\right.$$
for $t=0,1$. Indeed we have
\begin{align*}
&\tilde{\Theta}(\tilde{{\bf Z}}_n^t)-\Theta({\bf Z}_n^t)\\
=& \frac{1}{2}{\bf m} (K+\delta J) {\bf m}^\bot
+{\bf m}({\bf Z}_n^t-\delta(n-n_0){\bf e}_1)^\bot 
- \frac{1}{2}{\bf m}K {\bf m}^\bot
-{\bf m}{\bf Z}_n^{t \bot} \\
=& \frac{1}{2}\delta{\bf m}J{\bf m}^\bot -\delta (n-n_0)m_1\\
=& \left\{\begin{array}{ll}
\frac{\delta}{2}((n-n_0)^2+(n-n_0))-\delta (n-n_0)^2 & (m_1=n-n_0)\\
\frac{\delta}{2}((n-n_0-1)^2+(n-n_0-1))-\delta (n-n_0)(n-n_0-1) 
& (m_1=n-n_0-1)\\
\end{array}\right.\\
=& -\frac{\delta}{2}(n-n_0)(n-n_0-1),
\end{align*}
for $n_0\leq n \leq n_0+g+1$, and the case of $n=n_0-1$ is similarly shown.
Thus we have
\begin{align*}
\tilde{Q}_n^0=&
 \Theta({\bf Z}_{n-1}^{0}) 
 + \Theta({\bf Z}_{n}^{1}) 
 - \Theta({\bf Z}_{n-1}^{1}) 
 - \Theta({\bf Z}_{n}^{0})\\
&-\frac{\delta}{2}
((n-n_0-1)(n-n_0-2)+(n-n_0)(n-n_0-1)\\
&\ -(n-n_0-1)(n-n_0-2)-(n-n_0)(n-n_0-1))\\
=&Q_n^0
\end{align*}
for $n_0+1 \leq n \leq n_0+g+1$,
\begin{align*}
W_n^0 =& L+\delta + \Theta({\bf Z}_{n-1}^{1}) 
 + \Theta({\bf Z}_{n+1}^{0})
 - \Theta({\bf Z}_{n}^{0}) 
 - \Theta({\bf Z}_{n}^{1})\\
&-\frac{\delta}{2}
((n-n_0-1)(n-n_0-2)+(n-n_0+1)(n-n_0)\\
&\ -(n-n_0)(n-n_0-1)-(n-n_0)(n-n_0-1))\\
=&W_n^0
\end{align*}
for $n_0+1 \leq n \leq n_0+g$, and 
\begin{align*}
W_{n_0}^0 =& L+\delta + \Theta({\bf Z}_{n_0-1}^{1}) 
 + \Theta({\bf Z}_{n_0+1}^{0})
 - \Theta({\bf Z}_{n_0}^{0}) 
 - \Theta({\bf Z}_{n_0}^{1})\\
=&W_{n_0}^0+\delta.
\end{align*}
\end{proof}

\begin{lemma}\label{Dtilde}
Set ${\bf m}_n=\sum_{k=1}^{n-n_0-1}\Ibf_k$. 
For $1 \ll t \ll \delta$, we have 
$$
  \tilde{{\bf Z}}_{n}^t \in 
  \begin{cases} 
    \tilde{D}_{\bf 0} ~~ \text{for } n= n_0, \\
    \tilde{D}_{{\bf m}_{n-n_0-1}} ~~\text{for } n_0+1\leq n\leq n_0+g+1.
  \end{cases}
$$
\end{lemma}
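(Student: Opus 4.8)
The plan is to verify Lemma~\ref{Dtilde} \emph{directly} from the inequality description \eqref{region} of the fundamental regions, now applied to $\tilde K$. First I would note that $\tilde K=K+\delta J$ is again the period matrix of a tropical curve: it is built from the same $\vec\lambda$ but with $p_i$ replaced by $\tilde p_i=p_i+\delta$ and $L$ by $\tilde L=L+\delta$, and the ordering $\tilde p_0>\cdots>\tilde p_g>0$ is preserved, so \eqref{region} holds verbatim for $\tilde K$. Hence, writing $\nu=n-n_0\in\{0,1,\dots,g+1\}$ and ${\bf m}_n=\sum_{k=1}^{\nu-1}\Ibf_k$, it suffices to prove
\begin{align*}
  -{\bf l}\,(\tilde{\bf Z}_n^t)^\bot \;\le\; {\bf l}\tilde K\bigl({\bf m}_n+\tfrac12{\bf l}\bigr)^\bot
\end{align*}
for every ${\bf l}=\pm(\Ibf_k-\Ibf_{j-1})$ with $1\le j\le k\le g$. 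I would then substitute $\tilde{\bf Z}_n^t={\bf Z}_{n_0}^0-(L+\delta)\nu\,{\bf e}_1+t\vec\lambda$ and $\tilde K=K+\delta J$ into ``r.h.s.\ $-$ l.h.s.'' and sort the result by its dependence on $\delta$ and on $t$.

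The decisive step is the coefficient of $\delta$. Telescoping the identity $J\Ibf_m={\bf e}_1+{\bf e}_m-{\bf e}_{m+1}$ from \eqref{zt2} (with ${\bf e}_0={\bf e}_{g+1}:=0$) gives
\begin{align*}
  J\,{\bf m}_n^\bot=\nu\,{\bf e}_1^\bot-{\bf e}_\nu^\bot,
\end{align*}
so that, using also the easily checked $\tfrac12{\bf l}J{\bf l}^\bot=1$, the $\delta$-coefficient of ``r.h.s.\ $-$ l.h.s.'' collapses \emph{uniformly in the sign of} ${\bf l}$ to
\begin{align*}
  \delta\bigl({\bf l}J{\bf m}_n^\bot+\tfrac12{\bf l}J{\bf l}^\bot-\nu l_1\bigr)=\delta(\nu l_1-l_\nu+1-\nu l_1)=\delta(1-l_\nu),
\end{align*}
the terms in $l_1$ cancelling. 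This coefficient is $\ge\delta$ unless $l_\nu=1$, i.e.\ unless ${\bf l}=+(\Ibf_k-\Ibf_{j-1})$ with $j\le\nu\le k$. For every other ${\bf l}$ the remaining ($\delta$-free) part of ``r.h.s.\ $-$ l.h.s.'' is only $O(t)+O(1)$, so the inequality holds as soon as $\delta\gg t$; the boundary values $\nu=0$ ($n=n_0$) and $\nu=g+1$ fall under the same computation with ${\bf e}_0={\bf e}_{g+1}=0$.

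There remain the finitely many critical constraints ${\bf l}=+(\Ibf_k-\Ibf_{j-1})$, $j\le\nu\le k$, for which the $\delta$-terms cancel. Here ``r.h.s.\ $-$ l.h.s.'' equals $t\,{\bf l}\vec\lambda^\bot$ plus the constant ${\bf l}K({\bf m}_n+\tfrac12{\bf l})^\bot+{\bf l}({\bf Z}_{n_0}^0)^\bot-L\nu l_1$, which is independent of both $t$ and $\delta$, and ${\bf l}\vec\lambda^\bot=\lambda_k-\lambda_{j-1}>0$ because $\lambda$ is strictly increasing. Thus each critical inequality holds once $t$ exceeds a threshold depending only on $C$ and ${\bf Z}_{n_0}^0$. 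Choosing $t$ large enough for these and then $\delta\gg t$ for all the others is exactly the regime $1\ll t\ll\delta$ demanded by the statement.

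I expect the main obstacle to be conceptual rather than computational: Lemma~\ref{Dtilde} secretly asserts that, after the huge gap $\delta$ is inserted, the permutation $s$ of \S\ref{part2} relaxes to the identity, i.e.\ the solitons sort by size, as anticipated in the Remark via the pBBS. In the direct argument this sorting is precisely what the strict positivity $\lambda_k-\lambda_{j-1}>0$ encodes, so the real work lies in recognising the telescoping identity $J\,{\bf m}_n^\bot=\nu{\bf e}_1^\bot-{\bf e}_\nu^\bot$ that makes the $\delta$-coefficient collapse to $\delta(1-l_\nu)$, and in keeping the two thresholds ($t$ large, then $\delta\gg t$) correctly separated.
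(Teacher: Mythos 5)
Your proposal is correct and follows essentially the same route as the paper's own proof: reduce via the inequality description \eqref{region} of the fundamental regions, isolate the $\delta$-coefficient using the telescoping identity $J{\bf m}_n^\bot=(n-n_0){\bf e}_1^\bot-{\bf e}_{n-n_0}^\bot$ and $\tfrac12{\bf l}J{\bf l}^\bot=1$ so that it collapses to $\delta(1-l_{n-n_0})$, and then split into the non-critical case ($l_{n-n_0}\le 0$, settled by $\delta\gg t$) and the critical case ($l_{n-n_0}=1$, settled by $t$ large since ${\bf l}\vec\lambda^\bot>0$). Your only additions --- explicitly checking that \eqref{region} applies to $\tilde K$ because $\tilde p_i=p_i+\delta$ preserves the required structure, and handling $\nu=0,g+1$ via the convention ${\bf e}_0={\bf e}_{g+1}=0$ --- are details the paper leaves implicit, not a different argument.
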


\begin{proof}
From \eqref{region} it is enough to show that 
$$ -{\bf l}({\bf Z}_{n_0} -(n-n_0)(L+\delta){\bf e}_1+t\vec{\lambda})^{\bot}
\leq {\bf l}(K+\delta J)
\left(\sum_{k=1}^{n-n_0-1}\Ibf_k + \frac12 {\bf l}\right)^{\bot}$$
for any ${\bf l}=\Ibf_j - \Ibf_i $ ($0\leq i, j \leq g$).
Putting together all terms independent of $\delta$ and $t$ into $c$, 
this is equivalent to
\begin{align}
&c + 
t{\bf l}\vec{\lambda}^{\bot}
+\delta \left({\bf l}J
\left(\sum_{k=1}^{n-n_0-1}\Ibf_k + \frac12 {\bf l}\right)^{\bot}
-(n-n_0)l_1 \right) \geq 0. \label{dt1}
\end{align}
From formula \eqref{zt2} we have $J{\bf m}_n=(n-n_0){\bf e}_1-{\bf e}_{n-n_0}$,
and thus \eqref{dt1} becomes 
\begin{align}
c+ t{\bf l}\vec{\lambda}^{\bot} + \delta(-l_{n-n_0}+1)\geq 0. \label{dt2}
\end{align}
If $l_{n-n_0} = 0$ or $-1$, \eqref{dt2} holds obviously for 
$1 \ll t \ll \delta$,
while if $l_{n-n_0}=1$, \eqref{dt2} holds for large $t$ since
${\bf l}\vec{\lambda}^{\bot} >0$.
\end{proof}

\begin{proposition}\label{iota_inj}
The map $\iota_t$ is injective.
\end{proposition}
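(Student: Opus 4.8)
The plan is to show that the array $(T_n^0,T_n^1)_{n\in\Z}=\iota_0({\bf Z}_0)$ determines ${\bf Z}_0$; it suffices to treat $t=0$. The differences \eqref{WQ-T} first produce $(Q_n^0,W_n^0)_n$ and, by Lemma~\ref{Xn0}, an index $n_0$ with $X_{n_0+1}^0=0$; after a lattice translation (which merely shifts the array by an $n$-dependent affine term, and so is detected by it) we may assume ${\bf Z}_{n_0}^0\in D_{\bf 0}$. As noted in \S\ref{example}, the only remaining unknowns are the regions $D_{{\bf m}_n}$ containing the ${\bf Z}_n^0$: once these are known, the relations $T_n^0=\tfrac12{\bf m}_nK{\bf m}_n^{\bot}+{\bf m}_n({\bf Z}_n^0)^{\bot}$ become linear equations for ${\bf Z}_0$.

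To make the region assignment canonical I invoke the deformation of \S\ref{part2}. Replacing $W_{n_0}^0$ by $W_{n_0}^0+\delta$ with $\delta\gg1$ is realized on the theta side, by Lemmas~\ref{Ztilde} and \ref{QWtilde}, through the data $\tilde K=K+\delta J$ and $\tilde{\bf Z}_n^t={\bf Z}_n^t-\delta(n-n_0){\bf e}_1$, so that $\tilde{\bf Z}_0={\bf Z}_0+\delta n_0{\bf e}_1$ and recovering $\tilde{\bf Z}_0$ recovers ${\bf Z}_0$. The computation inside the proof of Lemma~\ref{QWtilde} gives the explicit relation $\tilde\Theta(\tilde{\bf Z}_n^t)=\Theta({\bf Z}_n^t)-\tfrac{\delta}{2}(n-n_0)(n-n_0-1)$ for $n_0\le n\le n_0+g+1$ and $t=0,1$, so the two slices $\tilde T_n^0,\tilde T_n^1$ are explicit functions of the given array. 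As a theta solution the deformed sequence again satisfies the bilinear equation \eqref{UD-tau}, now with $L$ replaced by $L+\delta$, and is therefore determined for all $t$ by these two slices; the whole deformed orbit is thus a function of the original array.

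Now evolve to a time with $1\ll t\ll\delta$. By Lemma~\ref{Dtilde} the region of $\tilde{\bf Z}_n^t$ is then the canonical one, $\tilde{\bf Z}_{n_0}^t\in\tilde D_{\bf 0}$ and $\tilde{\bf Z}_n^t\in\tilde D_{{\bf m}_n}$ with ${\bf m}_n=\sum_{k=1}^{n-n_0-1}\Ibf_k$ for $n_0+1\le n\le n_0+g+1$, \emph{independently} of the unknown $\tilde{\bf Z}_0$. On each such region $\tilde\Theta$ is affine, whence
\begin{align*}
{\bf m}_n\,\tilde{\bf Z}_0^{\bot}
=\tilde T_n^t-\tfrac12{\bf m}_n\tilde K{\bf m}_n^{\bot}
-{\bf m}_n\bigl(-n(L+\delta){\bf e}_1+t\vec{\lambda}\bigr)^{\bot},
\qquad n_0+1\le n\le n_0+g+1,
\end{align*}
with all right-hand sides known. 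Writing $n=n_0+1+j$, the coefficient vectors ${\bf m}_{n_0+1+j}=\sum_{k=1}^{j}\Ibf_k=(j,j-1,\dots,1,0,\dots,0)$ for $j=1,\dots,g$ are the rows of a unit lower-triangular matrix, hence linearly independent; the $g$ equations determine $\tilde{\bf Z}_0$ uniquely, and with it ${\bf Z}_0=\tilde{\bf Z}_0-\delta n_0{\bf e}_1$. Since every step is forced by the array $(T_n^0,T_n^1)_n$, the map $\iota_0$ is injective.

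The main obstacle is not this concluding linear algebra, which is routine, but the structural input already isolated as Lemma~\ref{Dtilde}: the assertion that for $1\ll t\ll\delta$ the deformed configuration settles into the canonical, ${\bf Z}_0$-independent pattern ${\bf m}_n=\sum_{k=1}^{n-n_0-1}\Ibf_k$. This is the tropical shadow of the fact that, after the large perturbation $\delta$, the clusters of the associated pBBS sort coherently by their sizes $\lambda_1<\cdots<\lambda_g$; guaranteeing this canonical arrangement is precisely what frees the reconstruction of ${\bf Z}_0$ from any prior knowledge of the regions.
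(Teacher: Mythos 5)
Your proposal is correct and takes essentially the same route as the paper's proof: deform the data by $\delta$ via Lemmas~\ref{Ztilde} and \ref{QWtilde}, use Lemma~\ref{Dtilde} to pin down the regions of $\tilde{{\bf Z}}_n^t$ for $1\ll t\ll\delta$, solve the resulting $g$ linear equations for $\tilde{{\bf Z}}_0$, and recover ${\bf Z}_0$ through \eqref{Z-delta}. The only (cosmetic) differences are that you propagate the deformed slices by the bilinear equation with $L+\delta$ where the paper invokes Proposition~\ref{sigma_inj} for the deformed system, and that you check the linear independence explicitly via the unit lower-triangular coefficient matrix where the paper refers back to the example.
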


\begin{proof}
For $(Q_n^0,W_n^0)_n \in {\rm Im}\, \sigma_0 \circ \iota_0$ and 
a large positive number $\delta$, $(\tilde{Q}_n^0,\tilde{W}_n^0)_n$ 
is determined by using Lemma~\ref{QWtilde}, and thus $\tilde{T}_n^t$ are 
uniquely determined by Proposition~\ref{sigma_inj}.
Further, from Lemma~\ref{Dtilde}
we know the region $\tilde{D}_{{\bf m}_n}$ the point 
$\tilde{{\bf Z}}_n^t$ belongs to for $1 \ll t \ll \delta$, thus
we obtain $g$ independent linear equations for $\tilde{{\bf Z}}_0$,
which uniquely determine $\tilde{{\bf Z}}_0$ (as explained in the example). 
Finally we get a unique ${\bf Z}_0$ through
${\bf Z}_{n_0} = \tilde{{\bf Z}}_{n_0}$ and \eqref{Z-delta}.
\end{proof}

%%%%%%%%%%%%%%%%%%%%%%%%%%%%%%%%%%%%%%%%%%
\subsection{Proof of Theorem \ref{J-Toda}}
%%%%%%%%%%%%%%%%%%%%%%%%%%%%%%%%%%%%%%%%%%
\label{part3}
From Lemma~\ref{sigma_inj} and Lemma~\ref{iota_inj}, 
$\iota_\sigma$ is injective.
We show that $\iota_\sigma$ is surjective.
Since $\iota_\sigma$ is continuous both for ${\bf Z}_0$ and 
$C=(C_{-1},C_0,\dots,C_g)$, 
it is enough to show
that $\iota_\sigma$ is a surjective map from a dense subset of $J(K)$ 
to a dense subset of $\mathcal{T}_C$ for a dense subset of
$$\{ C \in \R^{g+2} ~|~ \mbox{$C$ satisfies the generic condition 
\eqref{CD-condition}} \}.$$ 
Suppose that $C$ belongs to $\Q^g$. Let $N$ be a common multiple of
denominators of $C_i$'s. Let $J(K)_N$ denote 
$J(K)\cap (\frac{1}{N}\Z)^g$ and 
$(\mathcal{T}_C)_N=(\mathcal{T}_C)\cap (\frac{1}{N}\Z)^g$. 
Here $J(K)_N$ and $(\mathcal{T}_C)_N$ are isomorphic to 
$J(NK)_1$ and $(\mathcal{T}_{NC})_1$ (the set of integer points
of the phase space with the conserved quantities 
$NC=N(C_{-1},\dots,C_g)$) respectively.
From Proposition~4.4 of 
\cite{InoueTakenawa08}, 
we have $|(\mathcal{T}_{NC})_1|=(g+1)|B_{NL,N\lambda}|$ 
where $B_{NL,N\lambda}$ denotes the isolevel set 
of the corresponding pBBS.
From Proposition~2.2 of \cite{YYT03} it is easily proved by using 
``'10' elimination'' (or Theorem~3.11 of \cite{KTT06}) that 
the cardinal number $|B_{NL,N\lambda}|$ of 
$B_{NL,N\lambda}$ is $N^g p_0p_1\cdots p_{g-1}$,
which is a $(g+1)$-th part of $|J(NK)_1|$
from Lemma~2.5 of \cite{InoueTakenawa08}.
Therefore, since $\iota_\sigma$ is injective, $\iota_\sigma$
is a surjection from $J(K)_N$ to $(\mathcal{T}_C)_N$. Since $N$
can be taken as large as one likes and $\Q^g$ is dense in $\R^g$,
$\iota_\sigma$ is a surjection from $J(K)$ to $\mathcal{T}_C$ for any 
$C$ satisfying the generic condition \eqref{CD-condition}.

%%%%%%%%%%%%%%%%%%%%%%%%%%%%%%%%
\subsection{Initial value problem}
%%%%%%%%%%%%%%%%%%%%%%%%%%%%%%%
\label{init-prob}
Using the arguments of this section, we can solve
the initial value problem for the UD-pToda.
Indeed, $T_n^0$ and $T_n^1$ can be determined from $(Q_n^0,W_n^0)_n$
as the proof of Proposition~\ref{sigma_inj}. 
Set $\delta$ be a large positive number as 
$\delta \simeq gL$ --- from the properties of the BBS 
(soliton like behavior \cite{TakahashiSastuma90}), 
the clusters of $1$'s 
align coherently in the order of their sizes for $t\simeq L$.
For $t\simeq L$, Lemma~\ref{Dtilde} gives
the region $\tilde{D}_{{\bf m}_n}$ to which the point 
$\tilde{{\bf Z}}_n^t$ belongs. As explained in the example
we obtain $g$ independent linear equations for $\tilde{{\bf Z}}_0$.
Solving these, we obtain $\tilde{{\bf Z}}_0$, and thus ${\bf Z}_{0}$ by
${\bf Z}_{n_0}^0 = \tilde{{\bf Z}}_{n_0}^0$.
Then the solution to the initial value problem
is given by \eqref{tau-theta} with \eqref{WQ-T}.

%%%%%%%%%%%%%%%%%%%%%%%%%%%%%
\subsection*{Acknowledgement}
%%%%%%%%%%%%%%%%%%%%%%%%%%%%%

T.~T. appreciates the assistance from the Japan Society for the
Promotion of Science.
R.~I. is supported by the Japan Society for the
Promotion of Science, Grand-in-Aid for Young Scientists (B) (19740231).

%%%%%%%%%%%%%%%%%%%%%%%%%%%%%%%%%

\end{document}